\newtheorem{theorem}{Theorem}
\newtheorem{lemma}{Lemma}
\newtheorem{proposition}{Proposition}
\newtheorem{corollary}{Corollary}
\theoremstyle{definition}
\newtheorem{remark}{Remark}
\DeclareMathOperator{\Var}{\mathsf{Var}}
\DeclareMathOperator{\E}{\mathbb{E}}
\newcommand{\ofer}[1]{\textcolor{red}{[Ofer: #1]}}  
\newcommand{\uri}[1]{\textcolor{blue}{[Uri: #1]}}  
\definecolor{mygreen}{RGB}{0, 150, 50}
\newcommand{\liu}[1]{\textcolor{blue}{[Liu: #1]}}
\definecolor{myred}{RGB}{150, 50, 0}
\newcommand{\vect}[1]{\mathbf{#1}}
\DeclareMathOperator*{\argmax}{argmax}
\def\fisher{\mathrm{I}_F}
\def\tn{{\otimes n}}
\def\PP{\mathbb{P}}
\def\EE{\mathbb{E}}
\def\Ber{\mathrm{Ber}}
\title{Communication Complexity of Estimating Correlations}
\author{U. Hadar, J. Liu, Y. Polyanskiy, O. Shayevitz \thanks{Order of authors is alphabetical. U.H. and O.S. \{emails: urihadar@mail.tau.ac.il, ofersha@eng.tau.ac.il\} are with the Department of Electrical Engineering--Systems, Tel Aviv University, Tel Aviv, Israel. J.L. and Y.P. \{emails: jingbo@mit.edu, yp@mit.edu\} are with the Institute for Data, Systems, and Society and the Department of Electrical Engineering and Computer Science, Massachusetts Institute of Technology, Cambridge, MA 02139, USA.}}
\date{}
\begin{document}
\maketitle

\begin{abstract}
We characterize the communication complexity of the following distributed estimation problem. Alice and Bob observe infinitely many iid copies of $\rho$-correlated unit-variance (Gaussian or $\pm1$ binary) random variables, with unknown $\rho\in[-1,1]$. By interactively exchanging $k$ bits, Bob wants to produce an estimate $\hat\rho$ of $\rho$. We show that the best possible performance (optimized over interaction protocol $\Pi$ and estimator $\hat \rho$) satisfies $\inf_{\Pi \hat\rho}\sup_\rho \mathbb{E} [|\rho-\hat\rho|^2] = \tfrac{1}{k} (\frac{1}{2 \ln 2} + o(1))$. Curiously, the number of samples in our achievability scheme is exponential in $k$;
by contrast, a naive scheme exchanging $k$ samples achieves the same $\Omega(1/k)$ rate but with a suboptimal prefactor.
Our protocol achieving optimal performance is one-way (non-interactive).
We also prove the $\Omega(1/k)$ bound even when $\rho$ is restricted to any small open sub-interval of $[-1,1]$ (i.e. a local minimax lower bound). 
Our proof techniques rely on symmetric strong data-processing inequalities and various tensorization techniques from information-theoretic interactive common-randomness extraction.
Our results also imply an $\Omega(n)$ lower bound on the information complexity of the Gap-Hamming problem, for which we show a direct information-theoretic proof. 
\end{abstract}

\maketitle
\clearpage

\section{Introduction}
The problem of distributed statistical inference under communication constraints has gained much recent interest in the  theoretical computer science, statistics, machine learning, and information theory communities. The prototypical setup involves two or more remote parties, each observing local samples drawn from of a partially known joint statistical model. The parties are interested in estimating some well-defined statistical property of the model from their data, and to that end, can exchange messages under some prescribed communication model. The communication complexity associated with this estimation problem concerns the minimal number of bits that need to be exchanged in order to achieve a certain level of estimation accuracy. Whereas the sample-complexity of various estimation problems in the centralized case is well studied (see e.g. \cite{lehmann2006theory},\cite{van2004detection}), the fundamental limits of estimation in a distributed setup are far less understood, due to the inherent difficulty imposed by the restrictions on the communication protocol. 

In this paper, we study the following distributed estimation problem. Alice and Bob observe infinitely many iid copies of $\rho$-correlated unit variance random variables, that are either binary symmetric or Gaussian, and where the correlation $\rho\in[-1,1]$ is unknown. By interactively exchanging $k$ bits on a shared blackboard, Bob wants to produce an estimate $\hat\rho$ that is guaranteed to be $\epsilon$-close to $\rho$ (in the sense that $\EE[(\hat\rho - \rho)^2] \le \epsilon^2$) regardless of the true underlying value of the correlation. We show that the communication complexity of this task, i.e., the minimal number of bits $k$ that need to be exchanged between Alice and Bob to that end, 
is $\frac{1+o(1)}{2\epsilon^2\ln2}$ in both the binary and Gaussian settings, and one-way schemes are optimal. 
We also prove a local version of the bound, showing that the communication complexity is still $\Theta_\rho(1/\epsilon^2)$ even if the real correlation is within an interval of vanishing size near $\rho$.

Let us put our work in context of other results in the literature. The classical problem of communication complexity, originally introduced in a seminal paper by Yao for two parties \cite{Yao:1979:CQR:800135.804414}, has been extensively studied in various forms and variations, see e.g. \cite{kushilevitz_nisan_1996} and references therein. In its simplest (two-party) form, Alice and Bob wish to compute some given function of their local inputs, either exactly for any input or with high probability over some distribution on the inputs, while interactively exchanging the least possible number of bits. While in this paper we also care about the communication complexity of the  task at hand, our setting differs from the classical setup in important ways. First, rather than computing a specific function of {\em finite} input sequences with a small error probability under a known distribution (or in the worst case), we assume an unknown parametric distribution on {\em infinite} input sequences, and wish to approximate the underlying parameter to within a given precision. In a sense, rather than to compute a function, our task is to interactively extract the most valuable bits from the infinite inputs towards our goal. Notwithstanding the above, an appealing way to cast our problem is to require interactively approximating the function
\begin{align}
    f(\vect{X},\vect{Y}) = \lim_{n\to \infty} \frac{1}{n} \sum_{i=1}^n X_i Y_i \,
\end{align}
for infinite iid (binary or Gaussian) strings $\vect{X},\vect{Y}$, to within precision $\epsilon$, which we show requires $\Theta(1/\epsilon^2)$ bits of interactive communication.

In another related aspect, many traditional communication complexity lower bounds are proved via information-theoretic arguments, most notably by bounding the {\em information complexity} of good protocols over a suitable choice of distribution over the inputs, see e.g. the classical proof for the disjointness problem \cite{bar2004information}. Our proofs have a similar information-theoretic flavor; in fact, our key technical contribution is connecting a so-called {\em symmetric strong data-processing inequality (SDPI)},
previously considered in \cite{liu2016common} in the context of interactive secret key generation,
to interactive hypothesis testing and estimation problems. 
Loosely speaking, the symmetric SDPI gives the following bound:
\begin{align*}
\begin{array}{cc}
\textbf{mutual information {\em interchanged} between Alice and Bob}& \\
\le& \\
\rho^2 \times\textbf{mutual information {\em injected} by Alice and Bob}&
\end{array}
\end{align*}
This is formalized in our Lemmas~\ref{lem3}, \ref{lem4} and \ref{lem11}, 
where the upper and lower expressions above correspond to $R$ in eq.~\eqref{eq:R} and $S$ in eq.~\eqref{eq:S}, respectively. In fact, as a side application of this interactive SDPI, we show an $\Omega(n)$ lower bound on information complexity of the Gap-Hamming problem~\cite{indyk2003tight},\cite{chakrabarti2012optimal}, which has so far resisted an information-theoretic attack; see Remark~\ref{rem:GH} for details. 

There has also been much contemporary interest in distributed estimation with communication constraints under a different context, where a finite number of iid samples from a distribution belonging to some parametric family are observed by multiple remotely located parties, which in turn can communicate with a data center (either one-way or interactively) in order to obtain an estimate of the underlying parameters, under a communication budget constraint, see e.g. \cite{zhang2013information}, \cite{braverman2016communication}, \cite{han2018geometric}. These works are markedly different from ours: the samples observed by the parties are taken from the {\em same} distribution, and the main regime of interest is typically where the dimension of the problem is relatively high compared to the number of local samples (so that each party alone is not too useful), but is low relative to the total number of samples observed by all parties (so that centralized performance is good). The goal is then to communicate efficiently in order to approach the centralized performance. This stands in contrast to our case, where each party observes an {\em unlimited} number of samples drawn from a {\em different} component of a bivariate, single-parameter distribution, and the difficulty hence lies in the fact that the quantity of interest (correlation) is truly distributed; none of the parties can estimate it alone, and both parties together can estimate it to arbitrary precision in a centralized setup. Hence, the only bottleneck is imposed by communications. 

Another line of works closely related to ours has appeared in the information theory literature, limited almost exclusively to one-way protocols. The problem of distributed parameter estimation under communication constraints has been originally introduced in~\cite{zhang1988estimation}, where the authors provided a rate-distortion-type upper bound on the quadratic error in distributively estimating a scalar parameter using one-way communication (possibly to a third party) under a rate constraint in communication-bits per sample, for a limited set of estimation problems. They have studied our Gaussian setup in particular, and the upper bounds we briefly discuss herein can be deduced (albeit non-constructively) from their work (in~\cite{hadar2018distributed} it is shown how to constructively attain the same performance, and also generalize to the vector parameter case). There has been much followup work on this type of problems, especially in the discrete samples case, see~\cite{amari1998statistical} and references therein. A local and global minimax setup similar to our own (but again for one-way protocols) has been explored in~\cite{ahlswede1990minimax}. The local minimax bound we obtain (for one-way protocols) was essentially claimed in that paper, but a subtle yet crucial flaw in their proof of the Fisher information tensorization has been pointed out later in~\cite{amari1998statistical}. 

Finally, it is worth noting the closely related problem of distributed hypothesis testing for independence under communication constraints. In~\cite{ahlswede1986hypothesis}, the authors provide an exact asymptotic characterization of the optimal tradeoff between the rate (in bits per sample) of one-way protocols and the false-alarm error exponent attained under a vanishing mis-detect probability. This result has recently been extended to the interactive setup with a finite number of rounds~\cite{xiang2013interactive}.

In fact, some of our lower bounds are also based on a reduction to testing independence with finite communication complexity. For a special case of one-way protocols, this problem was recently analyzed in~\cite{sahasranand2018extra}. There is also an inherent connection between the problem of testing independence and generating common randomness from correlated sources, cf.~\cite{tyagi2015converses}, as well as between the problem of testing independence and hypercontractivity~\cite{polyanskiy2012hypothesis}. For common randomness, two recent (and independent) works~\cite{guruswami2016tight,liu2016common} showed that to (almost) agree on $L$ (almost) pure bits the minimal two-way communication required is $(1-\rho^2)L$. There are several differences between the results and techniques in these two works. The work~\cite{guruswami2016tight} followed upon earlier~\cite{canonne2017communication} and considers exponentially small probability of error. Their main tool is hypercontractivity (Concurrently, hypercontractivity bounds for one-way protocols in similar common randomness generation models were also obtained independently in \cite{lcv_isit_2015}\cite{lccv_isit_2016}). The lower bound in~\cite{guruswami2016tight} was partial, in the sense that the common randomness generated by Alice was required to be a function of her input and not of the transcript.  
Thus
\cite{guruswami2016tight}
\cite{lcv_isit_2015}\cite{lccv_isit_2016} all concern settings where one-way protocols are optimal.
In contrast, the work~\cite{liu2016common} followed on a classical work on interactive compression~\cite{kaspi1985two} and showed an unrestricted lower bound.
In that setting, one-way communication is not optimal for general sources (although it was numerically verified and proved in the limiting cases that one-way protocols are optimal for binary symmetric sources). %\liu{added some more comments}
The main tool in \cite{liu2016common} in the small communication regime was the ``symmetric strong data-processing inequality''. Here we adapt the latter to our problem.

\textit{Organization.} In Section~\ref{sec:main_results} we formally present the problem and state our main results. Section~\ref{sec:prelim} contains necessary mathematical background. 
Section~\ref{sec_b2g} proves that the achievability in the Gaussian case implies the achievability in the binary symmetric case (so that we only need to the achievability for Gaussian and converse for binary).
Section~\ref{sec_ub} proves the upper bounds.
Section~\ref{sec_lb1} proves the lower bounds in the special case of one-way protocols (as a warm-up),
and 
Section~\ref{sec_interactive} proves the lower bound in the full interactive case,
both for the global risks.
Section~\ref{sec_cr} discusses how to extend to the local version by using common randomness.
Section~\ref{sec_proof_ssdpi} gives the technical proof for the symmetric strong data processing inequality in the binary and Gaussian cases.

\section{Main results}\label{sec:main_results}

We define the problem formally as follows. Alice and Bob observe $\vect{X}$ and $\vect{Y}$ respectively, where $(\vect{X},\vect{Y})\sim P_{X Y}^\tn$.
% and $n$ is arbitrarily large (possibly infinite). 
The distribution $P_{X Y}$ belongs to one of the two families, parameterized by a single parameter $\rho\in[-1,1]$:
\begin{enumerate}
    \item Binary case: $X,Y\in\{\pm1\}$ are unbiased and $\PP[X=Y] = \tfrac{1+\rho}{2}$.
    \item Gaussian case: $X,Y$ are unit-variance $\rho$-correlated Gaussian.
\end{enumerate}
The communication between Alice and Bob proceeds in rounds: First, Alice writes $W_1 = f_1(\vect{X})$ on the board. Bob then writes $W_2 = f_2(\vect{Y},W_1)$ and so on where in the $r$-th round Alice writes $W_r$ if $r$ is odd, and Bob writes $W_r$ if $r$ is even, where in both cases $W_r = f_r(\vect{X},W_1,\ldots,W_{r-1})$.
%\begin{itemize}
%    \item First, Alice writes $W_1 = f_1(\vect{X})$ on the board.
%    \item Second, Bob writes $W_2 = f_2(\vect{Y},W_1)$ on the board.
%    \item In the $r$-th ($r$ odd) round Alice writes $W_r = f_r(\vect{X},W_1,\ldots,W_{r-1})$.
%    \item In the $r$-th ($r$ even) round Bob writes $W_r = f_r(\vect{Y},W_1,\ldots,W_{r-1})$.
%\end{itemize}
We note that, in principle, we allow each function $f_r$ to also depend on a private randomness (i.e. $f_r$ can be a stochastic map of its arguments).
We also note that our impossibility results apply to a slightly more general model where there is also a common randomness in the form of a uniform $W_0$ on $[0,1]$ pre-written on the board, but we do not need this for our algorithms.

Let $\Pi=(W_1,W_2,\ldots)$ be the contents of the board after all of (possibly infinitely many) rounds. We say that the protocol is {\em $k$-bit} if the entropy $H(\Pi) \le k$ for any $\rho \in[-1,1]$. Note that the protocol is completely characterized by the conditional distribution $P_{\Pi|\vect{X} \vect{Y}}$.

At the end of communication, Bob produces an estimate $\hat{\rho}(\Pi,\vect{Y})$ 
for the correlation $\rho$ of the underlying distribution. We are interested in characterizing the tradeoff between the communication size $k$ and the worst-case (over $\rho$) squared-error, which we call {\em quadratic risk}, in the regime where the number of samples $n$ is arbitrarily large but $k$ is fixed. Explicitly, the quadratic risk of the protocol $\Pi$ and the estimator $\hat\rho$ is given by 
\begin{align}\label{eq:riskDef1}
	R_\rho(\Pi,\hat{\rho}) \triangleq \E_\rho \left(\hat{\rho}(\Pi,\vect{Y}) - \rho\right)^2,
\end{align}
where $\E_\rho$ is the expectation under the correlation value $\rho$. Similarly, we write $P^{\rho}_{\vect{X} \vect{Y} \Pi}$ 
for the joint distribution corresponding to a fixed value of $\rho$. The (global) {\em minimax risk} is defined as
\begin{align}
R^*\triangleq \inf_{n,\Pi,\hat{\rho}}\,\sup_{-1\le \rho\le 1} R_\rho(\Pi,\hat{\rho}), \end{align}
whereas the {\em local minimax risk} is 
\begin{align}
R^*_{\rho, \delta} \triangleq \inf_{n,\Pi,\hat{\rho}}\, \sup_{|\rho'-\rho|\leq \delta} R_{\rho'}(\Pi,\hat{\rho}).
\label{e_local}
\end{align}
The infima in both the definitions above are taken over all $k$-bit protocols $\Pi$ and estimators $\hat\rho$, as well as the number of samples $n$. We will also discuss \textit{one-way protocols}, i.e. where $\Pi=W_1$ consists of a single message from Alice to Bob. We denote the global and local minimax risk in the one-way case by $R^{*1}$ and $R^{*1}_{\rho,\delta}$ respectively. 
% We will further also care about \textit{unbiased estimators}, i.e., these for which $\EE_\rho \hat\rho = \rho$ for all $\rho\in[-1,1]$, and write $R^{*,\textnormal{unbiased}}$ etc. for the relevant quantities.   

Our main results are the following.
\begin{theorem}[Upper bounds]\label{thm_ub} In \emph{both} the Gaussian \emph{and} the binary symmetric cases with infinitely many samples,
	\begin{align}\label{eq:interactive_local_risk_UB}
		R^*_{\rho,\delta} \le \frac{1}{k}\left(\frac{\left(1 - \rho^2\,\right)^2}{2 \ln 2} + o(1)\right),
	\end{align}
	as long as $\delta=o(1)$ (here and after, $o(1)$ means a vanishing sequence indexed by $k$),
	and 
	\begin{align}\label{eq:glblMinMx}
        R^* \le \frac{1}{k} \left(\frac{1}{2 \ln 2} + o(1) \right).
	\end{align}
	In fact, one-way protocols achieve these upper bounds.
\end{theorem}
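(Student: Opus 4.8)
The plan is to prove both inequalities by exhibiting a single family of one-way schemes and then tuning its parameters. By Section~\ref{sec_b2g} it is enough to do this in the Gaussian model, so I describe the scheme there. Fix a threshold $t$ (to be sent to $\infty$ with $k$) and set $q:=\bar\Phi(t)$. Using that the sample size $n$ is unrestricted, Alice partitions her samples into $L:=\lfloor k/h(q)\rfloor$ blocks (with $h(\cdot)$ the binary entropy in bits) and forms the standardized block-sums $G_1,\dots,G_L$, which are i.i.d.\ $\mathcal N(0,1)$; Bob forms the matching block-sums $H_1,\dots,H_L$, so that $(G_j,H_j)$ are i.i.d.\ $\rho$-correlated unit Gaussians. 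Alice writes $\Pi=(U_1,\dots,U_L)$, $U_j:=\indctr\{G_j>t\}$, on the board. Since Alice's marginal law is $\mathcal N(0,I_n)$ for every $\rho$, the $U_j$ are i.i.d.\ $\Ber(q)$, so $H(\Pi)=L\,h(q)\le k$ for all $\rho$ and this is a $k$-bit one-way protocol. Bob observes the i.i.d.\ sample $((U_j,H_j))_{j\le L}$, in which $\PP^\rho[U_j=1\mid H_j=h]=p_\rho(h):=\bar\Phi\!\big(\tfrac{t-\rho h}{\sqrt{1-\rho^2}}\big)$, and outputs an asymptotically efficient estimator of $\rho$ (the MLE, or a one-step estimator), clipped to $[-1,1]$.

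\textbf{From the scheme to the bound.} This is the classical i.i.d.\ analysis. Let $\mathcal I_t(\rho)$ be the per-copy Fisher information of the model $((U,H))$ about $\rho$. By local asymptotic normality (legitimate since $n$, hence $L$, may be taken arbitrarily large) and the clipping — which both supplies the uniform integrability needed to turn the $O(L^{-1/2})$-convergence in law into convergence of the MSE, and bounds $R_\rho$ by $4$ unconditionally — one gets $R_\rho(\hat\rho)\le\tfrac{h(q)}{k\,\mathcal I_t(\rho)}(1+o(1))$. Everything then reduces to the ratio $h(q)/\mathcal I_t(\rho)$ as $t\to\infty$, which is a Laplace/tail computation: conditioned on $U=1$, i.e.\ $G>t$, $G$ concentrates just above $t$, so the $H_j$'s with $U_j=1$ are to leading order i.i.d.\ $\mathcal N(\rho t,1-\rho^2)$; expanding accordingly gives $\mathcal I_t(\rho)=\tfrac{t\,e^{-t^2/2}}{\sqrt{2\pi}\,(1-\rho^2)}(1+o(1))$ and $h(q)=\tfrac{t\,e^{-t^2/2}}{2\sqrt{2\pi}\,\ln 2}(1+o(1))$ bits, hence $h(q)/\mathcal I_t(\rho)\to(1-\rho^2)/(2\ln 2)$, uniformly on $[-1,1]$ after a small separate argument near $\rho=\pm1$ (where $\mathcal I_t(\rho)\propto(1-\rho^2)^{-1}\to\infty$ and estimation is easy). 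Choosing $t=t_k\to\infty$ slowly enough that all the $o(1)$'s are uniform, $\sup_\rho R_\rho(\hat\rho)\le\tfrac1k\big(\tfrac{1}{2\ln 2}\sup_\rho(1-\rho^2)+o(1)\big)=\tfrac1k\big(\tfrac{1}{2\ln 2}+o(1)\big)$, which is \eqref{eq:glblMinMx}; and because $\delta=o(1)$, continuity of $\rho\mapsto\mathcal I_t(\rho)$ immediately gives $\sup_{|\rho'-\rho|\le\delta}R_{\rho'}(\hat\rho)\le\tfrac1k\big(\tfrac{1-\rho^2}{2\ln2}+o(1)\big)$ — already a local bound, but with prefactor $1-\rho^2$ rather than the claimed $(1-\rho^2)^2$.

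\textbf{The sharp local prefactor, and the obstacle.} To get $(1-\rho^2)^2$ in \eqref{eq:interactive_local_risk_UB} one uses the common randomness $W_0$, as in Section~\ref{sec_cr}: with shared randomness Alice can transmit the output of a \emph{stochastic} test channel $P_{U\mid G}$ at per-block cost $I(U;G)$ rather than $H(U)$ bits (by channel simulation / a reverse-Shannon argument), which makes a much larger class of test channels affordable; optimizing that channel so as to maximize the Fisher information at the target value $\rho$ per injected bit then drives the prefactor to $(1-\rho^2)^2/(2\ln 2)$, and $\delta=o(1)$ again gives the uniformity over $|\rho'-\rho|\le\delta$ for free. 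The step I expect to be hardest is making the ``Fisher information $\Rightarrow$ MSE'' passage rigorous \emph{uniformly in} $\rho$ over all of $[-1,1]$ — i.e.\ controlling the worst-case MSE of the clipped efficient estimator, not just its pointwise asymptotic variance, in particular as $\rho\to\pm1$ — together with pinning down the exact test channel and channel-simulation construction that attains the $(1-\rho^2)^2$ constant in the local case.
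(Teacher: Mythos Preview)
Your threshold--and--MLE construction for the global bound \eqref{eq:glblMinMx} is sound and is essentially a repackaging of the paper's ``max-of-Gaussians'' scheme: the paper has Alice send $W=\argmax_{i\le 2^k}X_i$ (exactly $k$ bits) and Bob output $\hat\rho_{\max}=Y_W/\E X_W$, then computes the variance directly from extreme-value asymptotics to get $R_\rho=\tfrac{1-\rho^2}{2k\ln2}(1+o(1))$. Your route via per-block Fisher information and asymptotic efficiency of the MLE reaches the same $(1-\rho^2)/(2\ln2)$ ratio; taking $q\approx 2^{-k}$ in your scheme makes $L\approx 2^k$, and the (typically unique) index $j$ with $U_j=1$ plays the role of $W$.

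The genuine gap is in how you get the extra factor of $(1-\rho^2)$ for the local bound. You invoke common randomness and reverse Shannon to pay $I(U;G)$ per block instead of $H(U)$; but for your deterministic $U=\indctr\{G>t\}$ these coincide, and if you make $P_{U|G}$ stochastic the Fisher information degrades along with the rate --- e.g.\ for $U=G+\sigma W$ with $\sigma\to\infty$ one gets Fisher-info-per-bit $\to 2\ln2$, which is \emph{worse} than the threshold channel at $\rho\neq0$. Your pointer to Section~\ref{sec_cr} is also a misreading: that section introduces common randomness to shift correlations for the \emph{lower} bound, not to aid achievability, and the paper states explicitly that its algorithms use no common randomness.

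The paper's mechanism is different: it exploits \emph{Bob's side information} via Slepian--Wolf--type binning. Alice finds the max over $2^{k'}$ samples with $k'=k/(1-\rho^2)$; Bob sifts his samples, marking those with $Y_j>\rho\sqrt{2k'\ln2}\,(1-o(1))$, which with high probability number $2^{k'(1-\rho^2)(1+o(1))}=2^{k(1+o(1))}$ and include the true maximizer. Hence the $k'(1-\rho^2)\approx k$ most significant bits of the index suffice for Bob to resolve it, and the variance becomes $\tfrac{1-\rho^2}{2k'\ln2}(1+o(1))=\tfrac{(1-\rho^2)^2}{2k\ln2}(1+o(1))$. In your framework the analogous fix is to compress the $U_j$'s at rate $H(U\mid H)$ rather than $H(U)$: your own heuristic ($H\mid U{=}1\approx\mathcal N(\rho t,1-\rho^2)$) gives $I(U;H)\approx q\cdot\tfrac{\rho^2 t^2}{2\ln2}\approx\rho^2 h(q)$ (the SDPI constant $\rho^2$ is attained), hence $H(U\mid H)\approx(1-\rho^2)h(q)$, which supplies exactly the missing factor --- without any common randomness.
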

\begin{remark}
Previously, \cite{hadar2018distributed} showed that there exists a one-way protocol and an unbiased estimator achieving $R_{\rho}(\Pi,\hat{\rho}) \le \frac{1}{k}\left(\frac{1 - \rho^2}{2 \ln 2} + o(1)\right)$ for any $\rho$.
The protocol (in the Gaussian case) sends the index $\argmax_{1\le i\le 2^k}X_i$ using $k$ bits and employs the super concentration property of the max.
Here, the local risk bound \eqref{eq:interactive_local_risk_UB} is tighter because we can send the index more efficiently using the side information $Y^{2^k}$ and the knowledge of $\rho$ within $o(1)$ error. Such a scheme has the drawback that it is specially designed for a small interval of $\rho$ (as in the definition of the local risk),
and hence the performance may be poor outside that small interval.
However, we remark that one can achieve the risk $\frac{1}{k}\left(\frac{\left(1 - \rho^2\,\right)^2}{2 \ln 2} + o(1)\right)$ at any $\rho$ by a \emph{two-way} protocol.
Indeed, Alice can use the first round to send $\omega(1)\cap o(k)$ bits to Bob so that Bob can estimate $\rho$ up to $o(1)$ error.
Then Bob can employ the one-way local protocol in \eqref{eq:interactive_local_risk_UB} for the $\rho$ estimated from the first round.
\end{remark}

\begin{theorem}[lower bounds]\label{thm_lb}In both the Gaussian and binary symmetric cases with infinitely many samples,
	\begin{align}\label{eq:lclMinMx}
	    R^{*}_{\rho, \delta} \ge \frac{(1 - |\rho|)^2}{2 k \ln 2} (1+o(1)).
	\end{align}
	In particular, since the global risk dominates the local risk at any $\rho$, we have
		\begin{align}\label{eq:glblMinMx1}
        R^* \ge \frac{1}{k} \left(\frac{1}{2 \ln 2} + o(1) \right).
	\end{align}
\end{theorem}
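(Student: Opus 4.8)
My plan is to prove the local bound \eqref{eq:lclMinMx}; the global bound \eqref{eq:glblMinMx1} then follows at once, since $R^*\ge R^*_{0,\delta}$ and the right-hand side of \eqref{eq:lclMinMx} at $\rho=0$ equals $\frac{1}{2k\ln 2}(1+o(1))$. The first step would be a \emph{reduction to the case $\rho=0$} that produces the factor $(1-|\rho|)^2$. Suppose a $k$-bit protocol with estimator achieves local risk $\epsilon^2$ near $\rho$, i.e.\ $R_{\rho'}\le\epsilon^2$ for all $|\rho'-\rho|\le\delta$. Given an $\tilde\rho$-correlated source with $|\tilde\rho|\le\delta$, Alice and Bob can, using only the pre-written common randomness $W_0$ and private randomness and \emph{without any communication}, turn it sample-by-sample into a $\rho'$-correlated source with $\rho'=\rho+(1-|\rho|)\,\tilde\rho$: in the Gaussian case take $X_i'=(\tilde X_i+cU_i)/\sqrt{1+c^2}$ and $Y_i'=(\tilde Y_i\pm cU_i)/\sqrt{1+c^2}$ with $U_i$ i.i.d.\ standard Gaussian supplied by $W_0$ and $c^2=|\rho|/(1-|\rho|)$, and in the binary case with probability $|\rho|$ replace $(\tilde X_i,\tilde Y_i)$ by $(B_i,\pm B_i)$ for a common fair bit $B_i$ from $W_0$ and keep $(\tilde X_i,\tilde Y_i)$ otherwise. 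Since $|\rho'-\rho|=(1-|\rho|)|\tilde\rho|\le\delta$, running the given protocol on the $\rho'$-source and reporting $\hat{\tilde\rho}=(\hat\rho'-\rho)/(1-|\rho|)$ is a valid $k$-bit scheme estimating $\tilde\rho$ near $0$ with risk $\epsilon^2/(1-|\rho|)^2$; hence $R^*_{\rho,\delta}\ge (1-|\rho|)^2\,R^*_{0,\delta}$. It would then remain to prove $R^*_{0,\delta}\ge\frac{1}{2k\ln 2}(1-o(1))$ for $\delta$ vanishing slowly enough (say $\delta\sqrt k\to\infty$).

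For the bound at $\rho=0$ I would pass from estimation to a \emph{local Cram\'er--Rao (van Trees)} statement. A risk-$\epsilon^2$ estimator near $0$ in particular distinguishes $\rho=0$ from $\rho\asymp\epsilon$; quantitatively, placing a smooth prior (with density vanishing at the endpoints) on a window $I\subseteq[-\delta,\delta]$ of width $w$ with $\epsilon\ll w\le 2\delta$ -- so that the prior's Fisher information $\asymp 1/w^2$ is $o(1/\epsilon^2)=o(k)$, using $\epsilon^2=O(1/k)$ from Theorem~\ref{thm_ub} -- van Trees gives
\begin{align}\label{eq:fisher-bd}
  \sup_{\rho_0\in I} J(\rho_0)\ \ge\ \tfrac{1}{\epsilon^2}\,\bigl(1-o(1)\bigr),
\end{align}
where $J(\rho_0)$ denotes the Fisher information that Bob's data $(\Pi,\vect{Y})$ carries about the correlation, evaluated at $\rho=\rho_0$. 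Since $\vect{Y}$ alone carries none, $J$ is exactly the information \emph{injected} by the transcript. The theorem thus reduces to the complementary bound
\begin{align}\label{eq:fisher-ub}
  J(\rho_0)\ \le\ 2k\ln 2\,\bigl(1+o(1)\bigr)\qquad\text{uniformly over }\rho_0=o(1),
\end{align}
as \eqref{eq:fisher-bd} and \eqref{eq:fisher-ub} together give $\epsilon^2\ge\frac{1}{2k\ln 2}(1-o(1))$.

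Establishing \eqref{eq:fisher-ub} is the heart of the matter and is where the symmetric SDPI enters. In the \emph{one-way} case $\Pi=W_1=f_1(\vect{X})$ it is already elementary: one computes that at $\rho_0=0$, $J(0)=\sum_{i}\Var\!\left(\E[X_i\mid W_1]\right)$ over the $n$ samples -- the portion of the variance of Alice's sequence ``explained'' by her message, which is precisely what that message must convey for Bob's correlated $\vect{Y}$ to become useful -- and an elementary inequality bounds $\Var(\E[X_i\mid W_1])\le 2\ln 2\cdot I(X_i;W_1)$, so $\sum_i\Var(\E[X_i\mid W_1])\le 2\ln 2\cdot I(\vect{X};W_1)\le 2\ln 2\cdot H(W_1)\le 2k\ln 2$. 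This is \emph{uniform in $n$}, so infinitely many samples do not help, and it is tight -- matched by the $\argmax$-index protocol of \cite{hadar2018distributed} through super-concentration of the Gaussian maximum. For a general \emph{interactive} protocol this identity fails ($W_r$ for even $r$ depends on $\vect{Y}$), and here I would invoke the symmetric SDPI of Lemmas~\ref{lem3}, \ref{lem4}, \ref{lem11}: the functional controlling $J$ is the ``interchanged information'' $R$ of \eqref{eq:R}, which is at most $\rho_0^2$ times the ``injected information'' $S$ of \eqref{eq:S}, and $S\le\sum_r H(W_r\mid W^{r-1})=H(\Pi)\le k$. Two tensorizations are at work: the symmetric-SDPI constant of $P_{XY}^{\otimes n}$ equals $\rho_0^2$ for every $n$, and $R\le\rho_0^2 S$ holds round-by-round and telescopes over the interaction. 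Gaussian and binary are handled \emph{in parallel} here, the symmetric SDPI having the same form in both (Section~\ref{sec_proof_ssdpi}), in contrast to the achievability where binary is reduced to Gaussian.

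I expect the main obstacle to be making \eqref{eq:fisher-ub} rigorous in the interactive setting. One must (i) pin down the information functional of $(\Pi,\vect{Y})$ that is simultaneously lower-bounded by the estimation performance (via van Trees) and upper-bounded by the symmetric SDPI -- the delicate point being that the SDPI constant $\rho_0^2$ degenerates as $\rho_0\to0$ whereas $J(\rho_0)$ does not, so the SDPI must be applied to a ``second-order'' (Fisher-type) object rather than to a raw divergence; (ii) carry out the round-by-round telescoping of the symmetric SDPI in the presence of the pre-written common randomness $W_0$ and of infinitely many samples; and (iii) take the $n\to\infty$, $\rho_0\to0$ and $k\to\infty$ limits in an order that recovers the sharp constant $\tfrac{1}{2\ln 2}$ rather than merely the rate $\Omega(1/k)$. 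The elementary one-way computation, together with the matching $\argmax$ achievability, is the consistency check that this constant is correct.
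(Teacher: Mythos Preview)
Your protocol-level reduction (shifting a $\tilde\rho$-source near $0$ to a $\rho'$-source near $\rho$ via common randomness, then rescaling the estimate) is correct and is the same device as the paper's Corollary~\ref{crlry:anyTwoCorrs}, merely applied one level up: you shift risks, the paper shifts KL divergences and hence Fisher informations. Your one-way computation $J(0)=\sum_i\Var(\E[X_i\mid W_1])\le 2\ln2\sum_i I(X_i;W_1)\le 2k\ln2$ (the middle inequality being Pinsker, applied coordinatewise) is a valid and more elementary alternative to the paper's KL-based route for that special case.

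The gap is in the interactive case, and it lies exactly at your obstacle~(i)---but the resolution is simpler than you suggest. You do \emph{not} need a ``second-order (Fisher-type)'' SDPI. The step you are missing is the paper's Theorem~\ref{thm5}: for any interactive protocol,
\begin{align*}
D\bigl(P^{\rho}_{\Pi\vY}\,\big\|\,P^{0}_{\Pi\vY}\bigr)\ \le\ I(\vX;\vY)-I(\vX;\vY\mid\Pi)\ =\ R(P_{\Pi\vX\vY}),
\end{align*}
i.e.\ the \emph{raw} KL divergence on Bob's side is already dominated by the interchanged information. Combining this with the symmetric SDPI $R\le\rho^2 S\le\rho^2 k$ (Theorem~\ref{thm6}) gives $D(P^{\rho}_{\Pi\vY}\|P^{0}_{\Pi\vY})\le\rho^2 k$, and then Lemma~\ref{lem:local_div_fisher} yields $J(0)=2\ln2\cdot\lim_{\rho\to0}D/\rho^2\le 2k\ln2$. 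The factor $\rho^2$ that you worry ``degenerates'' is precisely the quadratic prefactor that the Fisher information extracts; nothing further is needed. Note also that your direct one-way formula does not extend: at $\rho=0$ the interactive score involves $\E[X_j\mid\Pi,\vY]$, and since $\Pi$ now depends on $\vY$ the cross terms in $j$ no longer vanish---this is why the paper goes through KL. Two smaller points: van Trees needs $J(\rho_0)$ over an interval, not just $J(0)$, and the paper secures this by a second application of the correlation shift inside the KL bound (Corollary~\ref{crlry:anyTwoCorrs}), giving $J(\rho_0)\le 2k\ln2/(1-|\rho_0|)^2$; and the Gaussian lower bound is deduced from the binary one via Lemma~\ref{lem_reduction} (cf.\ Remark~\ref{rmrk_gaussianishard}) rather than handled ``in parallel'', because the regularity conditions of Lemma~\ref{lem:local_div_fisher} are delicate in the Gaussian case.
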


% This theorem is first proved in the special case of one-way protocols in Section~\ref{sec:one-way}. The full proof for interactive case is given in Section~\ref{sec:glob_proof}.

Note in particular that theorems Theorem~\ref{thm_ub} and \ref{thm_lb} have identified the exact prefactor in the global risk.

\begin{remark}[Unbiased estimation]
We note that the proof of Theorem~\ref{thm_lb} also implies that for any \emph{unbiased} estimator $\hat{\rho}$ of $\rho$ in the binary case it holds that
\begin{align}\label{eq:crlb_new}
    \Var \hat{\rho} \ge \frac{(1 - |\rho|)^2}{2 k \ln 2}.
\end{align}
We further note that an unbiased estimator with $\Var \hat{\rho} = (1 - \rho^2 + o(1))/(2 k \ln 2)$ was introduced in \cite{hadar2018distributed} (and discussed in Section \ref{sec_ub} below), establishing the tightness of \eqref{eq:crlb_new} at $\rho = 0$.

The bound \eqref{eq:crlb_new} follows from the Cram\'er-Rao inequality (see e.g.~\cite{van2004detection}) along with the bound we obtain for the Fisher information given in \eqref{eq:fishBndIntrctve}. The associated regularity conditions are discussed in Remark \ref{rmrk_regularity}.
\end{remark}

\begin{remark}[Gap-Hamming Problem] \label{rem:GH}
In the Gap-Hamming problem~\cite{indyk2003tight}, Alice and Bob are given binary length-$n$ vectors $(\vect{X}$ and $\vect{Y})$ respectively, with the promise that $\#\{i: X_i \neq Y_i\}$ is either $\le n/2 - \sqrt{n}$ or $\ge n/2 + \sqrt{n}$. They communicate in (possibly infinitely many) rounds to distinguish these two hypotheses. It was shown in~\cite{chakrabarti2012optimal} (later with simplified proofs in~\cite{vidick2012concentration}, \cite{sherstov2012communication}) that the communication complexity of any protocol that solves Gap-Hamming with small error probability is $\Omega(n)$ (an upper bound of $O(n)$ is trivial). However, whereas many interesting functions in communication complexity have information-theoretic lower bounds, Gap-Hamming has so far resisted an information-theoretic proof, with the exception of the single-round case for which a proof based on SDPI is known~\cite{garg_private_comm}. It is nevertheless already known that the information complexity of Gap-Hamming is linear, i.e., that $I( \Pi ; \vect{X},\vect{Y} )  = \Omega(n)$ for any $\Pi$ that solves it, under the uniform distribution on $(\vect{X},\vect{Y})$. This is however observed only indirectly, since the smooth rectangle bound used in the original proof is known to be ``below'' the information complexity, i.e., any lower bound proved using the smooth rectangle bound also yields a lower bound on information complexity. It is therefore of interest to note that our result in particular directly implies a $\Omega(n)$ lower bound on the information complexity of Gap-Hamming (and hence also on its communication complexity). 

To see this, we note that the main step in proving our main result is the inequality 
    \begin{equation}
        D(P^{\rho}_{\vect{X} \Pi} \| P^{0}_{\vect{X} \Pi}) \le \rho^2 I(\Pi; \vect{X},\vect{Y})\,. \label{eq:mainkl}
    \end{equation}
    which is implied by Theorems~\ref{thm5} and~\ref{thm6}, in Section~\ref{sec_interactive}. We note that it implies the $\Omega(n)$ lower-bound on distributional communication \textit{and information} complexity of the Gap-Hamming problem, see~\cite{chakrabarti2012optimal} for references and the original proof of $\Omega(n)$. Indeed, let $U \sim \Ber(1/2)$ and given $U$ let $\vect{X},\vect{Y}$ have correlation $\rho=(-1)^U \rho_0$, where $\rho_0=\tfrac{100}{\sqrt{n}}$. Take $\Pi$ to be a protocol used for solving the Gap-Hamming problem (which decides whether $\#\{i: X_i \neq Y_i\}$ is $\le n/2 - \sqrt{n}$ or $\ge n/2 + \sqrt{n}$ with small error probability). Its decision should equal $U$ with high probability, and hence there exists a decision rule based on $\Pi$ reconstructing $U$ with high probability of success. Thus $I(U;\Pi) =\Omega(1)$, and we further have
    \begin{align}
        I(U;\Pi) \le I(U; \Pi, \vect{X}) \le \frac{1}{2} D(P^{+\rho_0}_{\vect{X} \Pi} \| P^{0}_{\vect{X} \Pi}) + \frac{1}{2} D(P^{-\rho_0}_{\vect{X} \Pi} \| P^{0}_{\vect{X} \Pi})\,, 
    \end{align}
    where the last inequality follows from a property of the mutual information (\eqref{eq:mirad} below). Finally, from~\eqref{eq:mainkl} we get the statement that $H(\Pi) \ge \Omega(\rho_0^{-2}) = \Omega(n)$. 
    
    As pointed out by the anonymous reviewer, a more general version of the Gap-Hamming problem concerns the decision between $\#\{i\colon X_i\neq Y_i\}\le n/2-g$ and $n/2+g$ for some $\sqrt{n}\le g\le n/2$, and it was shown in~\cite[Proposition~4.4]{chakrabarti2012optimal} that the communication complexity is $\Omega(n^2/g^2)$. This result can also be recovered by the above argument. And notably, this result also implies the $R^* = \Omega(1/k)$ lower bound.
\end{remark}

\section{Preliminaries}\label{sec:prelim}
\subsection{Notation}
Lower- and upper-case letters indicate deterministic and random variables respectively, with boldface used to indicate $n$-dimensional vectors. 
For any positive integer $r$, the set $\{1,2,\dots,r\}$ is denoted by $[r]$. 
Let $P$ and $Q$ be two probability distributions over the same probability space. The {\em KL divergence} between $P$ and $Q$ is
\begin{align}
    D(P\|Q) = \int \log\left(\frac{dP}{dQ}\right)dP
\end{align}
with the convention that $D(P\|Q) = \infty$ if $P$ is not absolutely continuous w.r.t. $Q$. Logarithms are taken to the base $2$ throughout, unless otherwise stated. With this definition, the mutual information between two jointly distributed r.v.s $(X,Y)\sim P_{X Y}$ can be defined
\begin{equation}
I(X;Y) = D(P_{X Y} \| P_X \times P_Y), \label{eq:midkl}
\end{equation} 
and it satisfies the ``radius'' property:
\begin{equation}
    I(X;Y) = \inf_{Q_Y} D(P_{Y|X}\|Q_Y|P_X),  \label{eq:mirad}
\end{equation}
where the conditioning means taking the expectation of the conditional KL divergence w.r.t. $P_X$. Given a triplet of jointly distributed r.v.s $(X,Y,Z)$, the conditional mutual information between $X$ and $Y$ given $Z$ is 
\begin{equation}
I(X;Y | Z) = D(P_{X Y|Z} \| P_{X|Z} \times P_{Y|Z} | P_Z) 
\end{equation} 
We will say that r.v.s $A,B,C$ form a Markov chain $A-B-C$, if $A$ is independent of $C$ given $B$.

\subsection{Symmetric strong data-processing inequalities}
Given $P_{X Y}$, the standard data processing inequality states that $I(U;Y)\le I(U;X)$ for any $U$ satisfying $U-X-Y$.
Recall that a \emph{strong data processing inequality} (see e.g.\ \cite{polyanskiy2017strong}) is satisfied if there exists $s\in[0,1)$ depending on $P_{XY}$ such that $I(U;Y)\le sI(U;X)$ for any $U$ satisfying $U-X-Y$.

The connection between the strong data processing and communication complexity problems is natural, and $U$ can be thought of as the message from Alice to Bob,
$I(U;X)$ the communication complexity, and $I(U;Y)$ the information for the estimator.
However, the best constant $s$ in the strong data processing inequality is not symmetric (i.e.\ $s(P_{XY})= s(P_{YX})$ is not true for general $P_{XY}$),
whereas the performance in an interactive communication problems is by definition symmetric w.r.t.\ the two parties.
An inequality of the following form, termed ``symmetric strong data processing inequality'' in \cite{liu2016common},
plays a central role in interactive communication problems:
\begin{align}
&\quad I(U_1;Y)+I(U_2;X|U_1)+I(U_3;Y|U^2)+\dots
\nonumber
\\
&\le s_{\infty}[I(U_1;X)+I(U_2;Y|U_1)+I(U_3;X|U^2)+\dots]
\label{e_ssdpi}
\end{align}
where $U_1$, $U_2$, \dots must satisfy
\begin{align}
U_r-(X, U^{r-1})-Y,\quad &r\in\{1,2,\dots\}\setminus 2\mathbb{Z},\label{e_markov1}
\\
U_r-(Y, U^{r-1})-X,\quad &r\in\{1,2,\dots\}\cap 2\mathbb{Z},\label{e_markov2}
\end{align}
and where $s_{\infty}$ depends only on $P_{XY}$.
%Clearly $s_{\infty}\ge s$, and $s_{\infty}(P_{XY})=s_{\infty}(P_{YX})$.
Clearly $s_{\infty}(P_{XY})=s_{\infty}(P_{YX})$ and $s_{\infty}\ge s$.
A succinct characterization of $s_{\infty}$ in terms of the ``marginally convex envelope'' was reported in \cite{liu2016common}.
Using the Markov assumptions \eqref{e_markov1}-\eqref{e_markov2} we can also rewrite \eqref{e_ssdpi} as 
\begin{align}
I(X;Y)-I(X;Y|U)\le s_{\infty}I(U;X,Y).
\end{align}

When $X,Y$ are iid binary symmetric vectors with correlation $\rho^2$ per coordinate,
it was shown in \cite{liu2016common} that $s_{\infty}=\rho^2$, equal to the strong data processing constant.
In this paper, we extend the result to Gaussian vectors with correlation $\rho$ per coordinate (Theorem~\ref{thm6}).

In order to upper bound $s_{\infty}$ in the binary case, \cite{liu2016common} observed that $s_{\infty}(Q_{XY})$ is upper bounded by the supremum of $s(P_{XY})$ over $P_{XY}$ a ``marginally titled'' version of $Q_{XY}$.
Indeed, note that the Markov structure in the strong data processing inequality implies that $P_{XY|U=u}(x,y)=f(x)P_{XY}(x,y)$ for some function $f$.
In the case of symmetric strong data processing inequality, 
the Markov conditions \eqref{e_markov1} and \eqref{e_markov2} imply that
\begin{align}
P_{XY|U^r=u^r}(x,y)=f(x)P_{XY}(x,y)g(y),
\end{align}
which naturally lead one to considering the following result:
\begin{lemma}[{\cite[Theorem~6]{liu2016common}}]\label{lem_symmetric}
Let $Q_{XY}$ be the distribution of a binary symmetric random variables with correlation $\rho \in [-1,1]$, i.e. $Q_{XY}(x,y) = \tfrac{1}{4}(1+ (-1)^{1\{x\neq y\}}\rho)$ for $x,y \in \{0,1\}$. Let $(X,Y) \sim P_{X Y}$ have an arbitrary distribution of the form
$$ P_{X Y}(x,y) = f(x) g(y) Q_{X Y}(x,y)\,.$$
Then for any $U-X-Y-V$ we have
\begin{align}
I(U;Y) &\le \rho^2 I(U;X)\\
I(X;V) & \le \rho^2 I(Y;V)\,.
\label{eq:R1}
\end{align}
\end{lemma}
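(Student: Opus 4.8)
The plan is to reduce the lemma to an ordinary (one-directional) strong data-processing bound for the tilted distribution, and then to compute the relevant contraction coefficient exactly. We may assume $f,g>0$ and $|\rho|<1$, the remaining cases being trivial (then $P_{XY}$ degenerates and $Y$ determines $X$ or vice versa, so both sides of both inequalities coincide). The relabeling $(x,y)\mapsto(y,x)$ sends $P_{XY}=f(x)g(y)Q_{XY}(x,y)$ to a distribution of the same form with $f,g$ swapped and the same binary-symmetric $Q$, so the second inequality is the first one applied to the relabeled problem; hence it suffices to prove $I(U;Y)\le\rho^2 I(U;X)$ for every $U$ with $U-X-Y$. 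Writing $I(U;X)=\sum_u P_U(u)D(P_{X|U=u}\|P_X)$, $I(U;Y)=\sum_u P_U(u)D(P_{Y|U=u}\|P_Y)$ with $P_{Y|U=u}=P_{X|U=u}P_{Y|X}$ and $P_Y=P_XP_{Y|X}$, this follows once we show that the (input-independent) KL-contraction coefficient of the channel $P_{Y|X}$ satisfies $\eta_{\mathrm{KL}}(P_{Y|X})\le\rho^2$, i.e. $D(\nu P_{Y|X}\|\mu P_{Y|X})\le\rho^2 D(\nu\|\mu)$ for all distributions $\mu,\nu$ on $\{0,1\}$.

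The key step is to pass from the KL-contraction coefficient to the $\chi^2$-contraction coefficient, which is legitimate because $X$ is binary: for a channel $W$ with a two-element input alphabet one has $\eta_{\mathrm{KL}}(W)=\eta_{\chi^2}(W)=\sup_{\mu}\rho_{\mathrm m}^2(\mu\times W)$, where $\rho_{\mathrm m}^2(\mu\times W)$ is the squared maximal correlation of $X\sim\mu,\ Y\mid X\sim W$. (In general only $\eta_{\chi^2}\le\eta_{\mathrm{KL}}$ holds; the binary-input hypothesis is exactly what supplies the reverse inequality.) It therefore remains to check that $\rho_{\mathrm m}^2(\mu\times P_{Y|X})\le\rho^2$ for every $\mu$. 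This is a finite computation. For any binary joint $\pi$, $\rho_{\mathrm m}^2(\pi)=\mathrm{Corr}_\pi(X,Y)^2=\dfrac{(\pi_{00}\pi_{11}-\pi_{01}\pi_{10})^2}{\pi_X(0)\pi_X(1)\,\pi_Y(0)\pi_Y(1)}$, and the joint $\pi:=\mu\times P_{Y|X}$ is again a marginal tilt $\pi(x,y)\propto f'(x)g'(y)Q_{XY}(x,y)$ of the same $Q$ (tilting leaves the cross-ratio $\pi_{00}\pi_{11}/\pi_{01}\pi_{10}=(\tfrac{1+\rho}{1-\rho})^2$ invariant, and every positive $2\times 2$ table with that cross-ratio is such a tilt). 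Substituting $Q_{00}=Q_{11}=\tfrac{1+\rho}{4},\ Q_{01}=Q_{10}=\tfrac{1-\rho}{4}$ makes the numerator proportional to $16\rho^2(f'_0 f'_1 g'_0 g'_1)^2$ and the denominator proportional to $f'_0 f'_1 g'_0 g'_1\cdot A_{f'}A_{g'}$, where $A_{f'}:=[f'_0(1+\rho)+f'_1(1-\rho)][f'_0(1-\rho)+f'_1(1+\rho)]=(1-\rho^2)\big((f'_0)^2+(f'_1)^2\big)+2(1+\rho^2)f'_0 f'_1$ and $A_{g'}$ is the same expression in $g'$. Since $1-\rho^2\ge 0$, AM--GM gives $A_{f'}\ge 4 f'_0 f'_1$ and $A_{g'}\ge 4 g'_0 g'_1$, whence $\rho_{\mathrm m}^2(\pi)\le\dfrac{16\rho^2 f'_0 f'_1 g'_0 g'_1}{16 f'_0 f'_1 g'_0 g'_1}=\rho^2$. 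Together with the two reductions this proves both inequalities.

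I expect the only real obstacle to be justifying the identity $\eta_{\mathrm{KL}}(W)=\eta_{\chi^2}(W)$ for binary-input channels used in the second paragraph: this is the single place where the ``binary'' hypothesis is genuinely exploited (the identity fails for larger input alphabets, and even for binary inputs the input-\emph{dependent} coefficients $\eta_{\mathrm{KL}}(\mu,W)$ and $\eta_{\chi^2}(\mu,W)$ can differ, so one must route through the input-independent versions). Everything else is the short algebra above, whose only inequality --- the bound $A_{f'}\ge 4 f'_0 f'_1$ --- is sharp exactly at the untilted point $f'\equiv g'\equiv\mathrm{const}$, in agreement with the known fact $s_\infty(Q_{XY})=\rho^2$.
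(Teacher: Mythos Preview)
Your proof is correct, and the overall strategy---reduce by symmetry to the first inequality, observe it is an SDPI bound on the tilted channel $P_{Y|X}$, then invoke a binary-input SDPI fact and compute---is the same as the paper's. The difference is in which binary-input fact you invoke and what you therefore compute. The paper uses the Bhattacharyya formulation (Theorem~21 of~\cite{polyanskiy2017strong}): for binary input with rows $P,Q$, $\eta_{\mathrm{KL}}(W)\le 1-\bigl(\sum_v\sqrt{P(v)Q(v)}\bigr)^2$; it then verifies $\sum_y\sqrt{P_{Y|X}(y|0)P_{Y|X}(y|1)}\ge\sqrt{1-\rho^2}$ in three lines by rescaling $g$ so that $g_0+g_1=1$ and using $\sqrt{g_0g_1}\le\tfrac12$. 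You instead use the maximal-correlation formulation $\eta_{\mathrm{KL}}(W)=\eta_{\chi^2}(W)=\sup_\mu\rho_{\mathrm m}^2(\mu\times W)$ and bound the Pearson correlation of every tilted joint via the $A_{f'},A_{g'}\ge 4f'_0f'_1,\,4g'_0g'_1$ inequality. Both black-box facts live in the same reference and are equivalent in the binary-input binary-output case relevant here; the paper's Bhattacharyya route avoids the optimization over $\mu$ and is a bit shorter. Incidentally, the paper notes that the \emph{original} proof in~\cite{liu2016common} went via maximal correlation, so your argument is closer to that one than to the paper's alternative proof. Your flagged ``obstacle'' $\eta_{\mathrm{KL}}(W)=\eta_{\chi^2}(W)$ for binary input is precisely the content of the cited Theorem~21 (together with its Remark~8 for tightness when the output is also binary), so it is on the same footing as the paper's black box.
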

Lemma~\ref{lem_symmetric} was proved in \cite{liu2016common} by exploring the connection to the maximal correlation coefficient. 
In Section~\ref{sec_binary} we give another proof using properties of the strong data processing inequalities~\cite{polyanskiy2017strong}.

\iffalse
\begin{remark}
The proof of Lemma~\ref{lem_symmetric}  in \cite[Theorem~6]{liu2016common} proceeds by first observing that for any (not necessarily binary) $Q_{XY}$, the right side of  \eqref{eq:R1} equals the supremum of the square of maximal correlation of $P_{XY}$ over $P_{XY}$ of the form \eqref{eq:S}. 
Then direct computations can be done for the binary symmetric $Q_{XY}$.
If we only want to prove Lemma~\ref{lem3} for $r=2$, then it suffices to use the post-SDPI
(see e.g.\ \cite[Remark~8]{polyanskiy2017strong}) instead of 
 Lemma~\ref{lem_symmetric}.
\end{remark}
\fi

\subsection{Fisher information and Cram\`er-Rao inequalities}\label{subsec:fisher}
We recall some standard results from parameter estimation theory. Let $\theta$ be a real-valued parameter taking an unknown value in some interval $[a,b]$. We observe some random variable (or vector) $X$ with distribution $P(x|\theta)$ parameterized by $\theta$.
% Assume that $\{P(x|\theta)\}_{\theta\in[a,b]}$ are all absolutely continuous w.r.t. some fixed reference measure $\mu$. We say that $P(x|\theta)$ is  {\em regular} if 
% \begin{itemize}
%     \item $\frac{dP(\cdot|\theta)}{d\mu}(x)$ is twice continuously differentiable w.r.t. $\theta\in(a,b)$ for $\mu$-a.s. $x$. 
%     \item It holds that 
%     \begin{align}
%         \frac{d^m}{d\theta^m}\int e(x) dP(x|\theta) = \int e(x)\left(\frac{\partial^m}{\partial\theta^m}\frac{dP(\cdot|\theta)}{d\mu}(x)\right)d\mu(x) 
%     \end{align}
%     for any measurable $e(x)$ and $m=1,2$, such that the r.h.s. above exists. 
% \end{itemize}

Assume that $P(\cdot|\theta)$ is absolutely continuous with respect to a reference measure $\mu$, for each $\theta\in [a,b]$,
and $\frac{d P(\cdot|\theta)}{d\mu}(x)$ is differentiable with respect to $\theta\in(a,b)$ for $\mu$-almost all $x$.
Then the {\em Fisher information} of $\theta$ w.r.t.~$X$, denoted as $\fisher(X; \theta)$, is the variance
of the derivative of the log-likelihood w.r.t. $\theta$, 
% which for a regular $P(x|\theta)$ is more conveniently given by 
\begin{align}
    \fisher(X; \theta) \triangleq
    \int\left(\frac{\partial}{\partial\theta}\ln \frac{dP(\cdot|\theta)}{d\mu}(x)\right)^2 dP(x|\theta).
    % -\int\frac{\partial^2}{\partial\theta^2}\log \left(\frac{dP(\cdot|\theta)}{d\mu}(x)\right) dP(x|\theta). 
\end{align}
% \liu{I used the variance definition of Fisher and removed the previous definition as integral of second derivative of log likelihood. The variance definition is closer to the proof of van Trees (via Cauchy Schwarz)}
%Note that we defined the Fisher information in nats (in compliance with the Bayesian Cram\'er-Rao bound discussed in the next), even though the KL divergence is in bits.\uri{I dont get this sentence. The units of FI are $[1/\theta^2]$}

We now record some useful facts concerning the Fisher information. First, we recall that the Fisher information encodes the curvature of the KL divergence w.r.t. translation: Let 
\begin{align}
    g(\theta,\epsilon) \triangleq D\left(P(x|\theta) \| P(x|\theta+\epsilon)\right)
\end{align}
for any $\theta,\theta+\varepsilon\in(a,b)$. The following property is well-known:
\begin{lemma}\label{lem:local_div_fisher}
Under suitable regularity conditions,  $\frac{\partial }{\partial\varepsilon}g(\theta,\varepsilon)|_{\epsilon=0} = 0$, and 
\begin{align}
    \fisher(X; \theta) =\ln2\cdot \left.\frac{\partial^2 g(\theta,\varepsilon)}{\partial\varepsilon^2}\right|_{\varepsilon = 0},
\end{align}
which implies that
\begin{align}\label{eq:kl_to_fisher}
    \fisher(X; \theta) = 2\ln2\cdot \lim_{\varepsilon\to 0}\frac{g(\theta,\varepsilon)}{\varepsilon^2}.
\end{align}
\end{lemma}

\begin{remark}\label{rmrk_regularity}
The ``regularity conditions'' in 
Lemma \ref{lem:local_div_fisher} (and Lemma \ref{lem:bcrlb} below) are to ensure that one can apply the dominated convergence theorem to exchange certain integrals and differentiations in the calculus. 
See for example \cite[Section~2.6]{kullback} for details.
In particular, these conditions are fulfilled if $\sup_{x,\theta}\frac{d P(\cdot|\theta)}{d\mu}(x)<\infty$, $\inf_{x,\theta}\frac{d P(\cdot|\theta)}{d\mu}(x)>0$, and $\sup_{x,\theta}\frac{\partial^m}{\partial\theta^m}\left[\frac{d P(\cdot|\theta)}{d\mu}(x)\right]<\infty$ for $m=1,2,3$.
In the interactive estimation problem, these conditions are always satisfied for sources $
(\bf X,Y)$ on finite alphabets (even if the message alphabets are not finite). Indeed, suppose that $\bf X,Y$ are binary vectors, and that Alice performs an estimation. Let the reference measure $\mu=P^0(\Pi,\bf X)$ be the distribution under $\rho=0$.
We have that 
\begin{align}
\frac{d P^{\rho}}
{d \mu}(\Pi,{\bf x})
=
\frac{\sum_{\bf y}P(\Pi|{\bf x,y})P^{\rho}({\bf x,y})}
{\sum_{\bf y}P(\Pi|{\bf x,y})P^0({\bf x,y})}
\le 
\sup_{\bf x,y}\frac{P^{\rho}({\bf x,y})}{P^0({\bf x,y})}
\end{align}
is bounded by a value independent of $\Pi$.
Similarly, 
\begin{align}
\frac{d P^{\rho}}
{d \mu}(\Pi,{\bf x})
&\ge
\inf_{\bf x,y}\frac{P^{\rho}({\bf x,y})}{P^0({\bf x,y})},
\\
\frac{\partial^m}{\partial \rho^m}\left[\frac{d P^{\rho}}
{d \mu}(\Pi,{\bf x})\right]
&\le 
\sup_{\bf x,y}\frac{\frac{\partial^m}{\partial\rho^m}P^{\rho}({\bf x,y})}{P^0({\bf x,y})}.
\end{align}
\end{remark}

% When $\mathcal{X}$ is a discrete space we can take $\mu$ to be the counting measure on $2^{\mathcal{X}}$, and then $\frac{dP(x|\theta)}{d\mu(x)}$ is simply a conditional probability mass function. In this case, if $\frac{dP(x|\theta)}{d\mu(x)} > 0$ for any $x\in\mathcal{X}, \theta\in(a,b)$ and is infinitely differentiable w.r.t. $\theta\in(a,b)$, then $P(x|\theta)$ is regular, $\fisher(X; \theta) <\infty$, and the relation~\eqref{eq:kl_to_fisher} holds. This simply follows since all the involved quantities are finite \uri{not necessarily!} sums of infinitely differentiable functions.  

% The Fisher information can serve to lower bound quadratic risk of unbiased estimators, i.e., those for which $\EE_\theta \hat\theta = \theta$ for any $\theta \in [a,b]$. 
% \begin{lemma}[Cram\'er-Rao inequality, see e.g.~\cite{van2004detection}]\label{lem:crlb}
% If $P(x|\theta)$ satisfies suitable regularity conditions and $\fisher(X; \theta)<\infty$ for some $\theta$, then
% \begin{align}\label{eq:CRLB}
%     \E_\theta (\hat{\theta}(X) - \theta)^2 \geq \frac{1}{\fisher(X; \theta)}
% \end{align}
% for any unbiased estimator $\hat{\theta}$. 
% \end{lemma}

The Fisher information can be used to lower bound the {\em expected} quadratic risk of estimating $\theta$ from $X$ under a prior distribution on $\theta$. 

% \liu{I removed Cramer Rao (since we do not focus on the unbiased cases)}
\begin{lemma}[Bayesian Cram\'er-Rao inequality, see e.g.~\cite{van2004detection}]\label{lem:bcrlb}
Let $\lambda$ be an absolutely continuous density on a closed interval $\mathcal{J}\subseteq [a,b]$, and assume $\lambda$ vanishes at both endpoints of $\mathcal{J}$. If $P(x|\theta)$ satisfies suitable regularity conditions and $\fisher(X; \theta)<\infty$ for almost all $\theta$,
% \liu{I added ``for each $\theta$''. Didn't check the refs if this is necessary.} then 
\begin{align}\label{eq:bcrlb}
\E_{\theta\sim\lambda}\E_\theta(\hat{\theta}(X) - \theta)^2 \ge \frac{1}{\mathrm{I}^\lambda +  \E_{\theta \sim \lambda} \fisher(X; \theta)}
\end{align} 
for any estimator $\hat{\theta}$, where $\mathrm{I}^\lambda = \int_{\mathcal{J}} \frac{\lambda'^2}{\lambda} d\theta$. 
\end{lemma}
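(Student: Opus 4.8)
The statement to prove is the Bayesian Cram\'er-Rao inequality (Lemma~\ref{lem:bcrlb}).

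\textbf{Plan of proof.} The strategy is the standard van Trees / Bayesian information-inequality argument: exhibit a random variable whose correlation with the estimation error is controlled, then apply Cauchy--Schwarz. Concretely, I would consider the joint density $\lambda(\theta)\frac{dP(\cdot|\theta)}{d\mu}(x)$ of $(\theta,X)$ on $\mathcal{J}\times\mathcal{X}$, and introduce the \emph{score of the joint distribution},
\begin{align}
    \psi(\theta,x) \triangleq \frac{\partial}{\partial\theta}\ln\Bigl(\lambda(\theta)\tfrac{dP(\cdot|\theta)}{d\mu}(x)\Bigr)
    = \frac{\lambda'(\theta)}{\lambda(\theta)} + \frac{\partial}{\partial\theta}\ln\tfrac{dP(\cdot|\theta)}{d\mu}(x).
\end{align}
The two key computations are: (i) $\E[\psi(\theta,X)] = 0$, which follows from differentiating $\int\int \lambda(\theta)\tfrac{dP(\cdot|\theta)}{d\mu}(x)\,d\mu(x)\,d\theta = 1$ under the integral sign, using the regularity conditions to justify the interchange and the vanishing of $\lambda$ at the endpoints of $\mathcal{J}$ to kill the boundary term; and (ii) $\E[(\hat\theta(X)-\theta)\,\psi(\theta,X)] = 1$, which comes from integration by parts in $\theta$: differentiating $\int\int(\hat\theta(x)-\theta)\lambda(\theta)\tfrac{dP(\cdot|\theta)}{d\mu}(x)\,d\mu(x)\,d\theta$ is not quite the right move, so instead I would start from $\int\int\hat\theta(x)\,\partial_\theta\bigl[\lambda(\theta)\tfrac{dP(\cdot|\theta)}{d\mu}(x)\bigr]\,d\mu(x)\,d\theta = 0$ (boundary terms vanish) and $\int\int\theta\,\partial_\theta\bigl[\lambda(\theta)\tfrac{dP(\cdot|\theta)}{d\mu}(x)\bigr]\,d\mu(x)\,d\theta = -1$ (integrate by parts in $\theta$, the boundary term vanishes since $\lambda$ vanishes at the endpoints, leaving $-\int\int\lambda\,\tfrac{dP}{d\mu}\,d\mu\,d\theta = -1$), so subtracting gives $\E[(\hat\theta-\theta)\psi] = 1$.

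Given these two identities, Cauchy--Schwarz yields
\begin{align}
    1 = \bigl(\E[(\hat\theta(X)-\theta)\psi(\theta,X)]\bigr)^2 \le \E[(\hat\theta(X)-\theta)^2]\cdot\E[\psi(\theta,X)^2],
\end{align}
so it remains to show $\E[\psi(\theta,X)^2] \le \mathrm{I}^\lambda + \E_{\theta\sim\lambda}\fisher(X;\theta)$. Expanding the square, the cross term $2\,\E\bigl[\tfrac{\lambda'(\theta)}{\lambda(\theta)}\cdot\partial_\theta\ln\tfrac{dP(\cdot|\theta)}{d\mu}(X)\bigr]$ vanishes because, conditionally on $\theta$, the expectation of the likelihood score is zero (this is the elementary unbiasedness of the score, valid under the regularity conditions). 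The remaining two terms are exactly $\E[(\lambda'(\theta)/\lambda(\theta))^2] = \int_{\mathcal{J}}\tfrac{\lambda'^2}{\lambda}\,d\theta = \mathrm{I}^\lambda$ and $\E_{\theta\sim\lambda}\fisher(X;\theta)$ by definition of the Fisher information, which completes the proof.

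\textbf{Main obstacle.} The analytic content is light; the delicate part is purely the justification of the interchanges of differentiation and integration — differentiating under the integral sign to get $\E[\psi]=0$, the integration by parts in $\theta$ (where one needs $\lambda$ absolutely continuous and vanishing at the endpoints of $\mathcal{J}$ so that no boundary term survives), and the conditional score identity. These are precisely the ``suitable regularity conditions'' invoked in the statement, and per Remark~\ref{rmrk_regularity} they reduce to dominated-convergence hypotheses (uniform boundedness of the density, its reciprocal, and its first few $\theta$-derivatives), which hold in our finite-alphabet interactive setting. So in the write-up I would state the identities, indicate that each interchange is licensed by the hypotheses (citing \cite{van2004detection} or \cite[Section~2.6]{kullback}), and keep the algebra — the expansion of $\E[\psi^2]$ and the Cauchy--Schwarz step — explicit but brief.
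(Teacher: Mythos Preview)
Your proof is the standard van Trees argument and is correct. Note, however, that the paper does not supply its own proof of this lemma: it is stated as a known result with a citation to \cite{van2004detection}, so there is no ``paper's proof'' to compare against --- your write-up is essentially what one would find in the cited reference.
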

A common choice of prior (see e.g. \cite{tsybakov2009Introduction}) is 
\begin{align}\label{eq:commonPrior}
    \lambda = \frac{2}{|\mathcal{J}|} \lambda_0 \left( \frac{\theta - \theta_0}{|\mathcal{J}|/2} \right)
\end{align}
where $\theta_0$ is the center of the interval $\mathcal{J}$, and $\lambda_0(x) = \cos^2(\pi x/2)$ for $-1 \le x \le 1$ and $0$ otherwise. This prior satisfies $\mathrm{I}^\lambda  = (2 \pi/|\mathcal{J}| )^2$.

% \section{One-way protocols}\label{sec:one-way}
% In this subsection we prove Theorem~\ref{thrm:global} in the special case of one-way protocols. 

\section{Reduction of binary to Gaussian}\label{sec_b2g}
%\liu{A similar argument in LCV}
In this section we show that an achievability scheme for iid Gaussian vector can be converted to a scheme for binary vector by a preprocessing step and applying the central limit theorem (CLT). 
We remark that a similar argument was used in \cite{liu2016common} in the context of common randomness generation.
\begin{lemma}\label{lem_reduction}
Suppose that  $(\Pi,\hat{\rho})$
is a scheme for iid sequence of Gaussian pairs at some length $n$,
and the message alphabet size $|\Pi|<\infty$.
Then there exists a scheme $(\Pi^T,\hat{\rho}^T)$ for iid sequence of binary symmetric pairs of length $T$, for each $T=1,2,\dots$, 
such that 
\begin{align}
\lim_{T\to\infty} H(\Pi^T)&=H(\Pi),\quad
\forall \rho\in[-1,1],
\label{e50}
\\
\lim_{T\to\infty}
R_{\rho}(\Pi^T,\hat{\rho}^T)&\le R_{\rho}(\Pi,\hat{\rho})
,\quad
\forall \rho\in[-1,1],
\label{e51}
\end{align}
where $\rho$ denotes the correlation of the Gaussian or binary pair.
\end{lemma}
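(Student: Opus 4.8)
Fix the Gaussian scheme $(\Pi,\hat\rho)$ at block length $n$, with $|\Pi|\le M<\infty$. We may assume $\hat\rho$ takes values in $[-1,1]$, since projecting the estimator onto $[-1,1]$ does not change $H(\Pi)$ and, as the true correlation lies in $[-1,1]$, cannot increase $R_{\rho'}(\Pi,\hat\rho)$ for any $\rho'$. Given a binary block length $T$, discard the last $T-n\lfloor T/n\rfloor$ samples, split the rest into $n$ blocks of size $m:=\lfloor T/n\rfloor$, and for $j\in[n]$ let Alice and Bob form the normalized block sums $\bar X_j:=m^{-1/2}\sum_{i\in\mathrm{block}\ j}X_i$ and $\bar Y_j$. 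Since $X_i,Y_i\in\{\pm1\}$ with $\E[X_iY_i]=\rho$, each $(\bar X_j,\bar Y_j)$ is a normalized sum of $m$ i.i.d.\ mean-zero vectors with covariance $\left(\begin{smallmatrix}1&\rho\\\rho&1\end{smallmatrix}\right)$, so by the multivariate CLT $(\bar{\vect X},\bar{\vect Y})$ converges weakly, as $m\to\infty$, to $P_{XY}^\tn$, the law of $n$ i.i.d.\ $\rho$-correlated unit-variance Gaussian pairs. The one genuine obstacle is that $(\bar{\vect X},\bar{\vect Y})$ is lattice-valued and hence never close in total variation to a continuous law; to repair this, Alice and Bob \emph{dither}: drawing private $\vect Z_A,\vect Z_B\sim\mathcal N(0,I_n)$, they form $\tilde{\vect X}:=(\bar{\vect X}+\eta\vect Z_A)/\sqrt{1+\eta^2}$ and $\tilde{\vect Y}:=(\bar{\vect Y}+\eta\vect Z_B)/\sqrt{1+\eta^2}$ for a parameter $\eta=\eta(T)\downarrow 0$ to be fixed later. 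The binary scheme $(\Pi^T,\hat\rho^T)$ then runs the Gaussian protocol and estimator verbatim on $(\tilde{\vect X},\tilde{\vect Y})$; it uses the same message alphabet, so $|\Pi^T|\le M$, and $\hat\rho^T\in[-1,1]$.

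The coordinates of $\tilde{\vect X},\tilde{\vect Y}$ have unit variance, and the within-pair covariance is $\rho_\eta:=\rho/(1+\eta^2)$, so for fixed $\eta$, $(\tilde{\vect X},\tilde{\vect Y})$ converges weakly to the law of $n$ i.i.d.\ $\rho_\eta$-correlated unit-variance Gaussian pairs as $m\to\infty$. Since $\tilde{\vect X},\tilde{\vect Y}$ now have a (Gaussian-mixture) density, Scheff\'e's lemma upgrades this to convergence in \emph{total variation}; moreover a Fourier estimate exploiting the mollifier $e^{-\eta^2\|\xi\|^2/2}$ appearing in the characteristic function makes the rate of the form $C(\eta,n)/\sqrt m$ with $C(\eta,n)$ polynomial in $1/\eta$ and, crucially, independent of $\rho\in[-1,1]$ (the dither renders the limiting covariance uniformly nondegenerate, with least eigenvalue $\ge\eta^2$, which is all that constant depends on). Composing with the fixed, TV-contractive map $(\vect X,\vect Y)\mapsto(\Pi,\hat\rho)$ given by protocol-plus-estimator, we conclude, uniformly in $\rho$, that the joint law of $(\Pi^T,\hat\rho^T)$ lies within $C(\eta,n)/\sqrt m$ of the joint law of $(\Pi,\hat\rho)$ produced when the pure Gaussian scheme is run at correlation $\rho_\eta$. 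As $H$ is uniformly continuous on the $M$-simplex and $(\hat\rho^T-\rho)^2\le 4$ is bounded, it follows that $H(\Pi^T)$ and $\E[(\hat\rho^T-\rho)^2]$ equal, respectively, the transcript entropy and $\E_{\rho_\eta}[(\hat\rho-\rho)^2]$ of the pure Gaussian scheme at correlation $\rho_\eta$, up to an error vanishing as $m\to\infty$, uniformly in $\rho$.

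It remains to remove the mismatch $\rho_\eta\neq\rho$. The map $\rho'\mapsto P^{\rho'}_{XY}$ is continuous in total variation on $[-1,1]$, hence so are $\rho'\mapsto H(\Pi)$ and $\rho'\mapsto R_{\rho'}(\Pi,\hat\rho)=\E_{\rho'}[(\hat\rho-\rho')^2]$ under the Gaussian model; and since $|\hat\rho-\rho'|\le 2$, $\E_{\rho_\eta}[(\hat\rho-\rho)^2]\le R_{\rho_\eta}(\Pi,\hat\rho)+O(|\rho-\rho_\eta|)=R_{\rho_\eta}(\Pi,\hat\rho)+O(\eta^2)$. Since $\rho_\eta\to\rho$ as $\eta\to0$, letting $\eta=\eta(T)\downarrow 0$ slowly enough that $C(\eta(T),n)/\sqrt{m(T)}\to 0$ (possible because $C(\eta,n)$ grows only polynomially in $1/\eta$) and combining the previous estimates yields, for every $\rho\in[-1,1]$, $\lim_{T\to\infty}H(\Pi^T)=H(\Pi)$ and $\limsup_{T\to\infty}R_\rho(\Pi^T,\hat\rho^T)\le R_\rho(\Pi,\hat\rho)$, i.e.\ \eqref{e50}--\eqref{e51}.

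\textbf{Where the difficulty lies.} Everything except the total-variation step is routine. The point is that a lattice law cannot approach a continuous law in total variation, so one cannot simply feed the bare block sums to a protocol with discrete (hence discontinuous) outputs and invoke weak continuity; the vanishing Gaussian dither fixes this but at the price of (i) a quantitative CLT in total variation with constants uniform in $\rho$ --- including as $\rho\to\pm1$, where the \emph{undithered} target degenerates --- and (ii) the correlation perturbation $\rho\mapsto\rho/(1+\eta^2)$, whose effect is absorbed using the a priori boundedness of the (clipped) estimator together with the continuity of the Gaussian risk and transcript entropy in the correlation parameter.
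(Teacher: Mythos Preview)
Your proof follows the paper's strategy: block-average the binary samples, add a vanishing Gaussian dither so the resulting law converges to the target Gaussian in \emph{total variation} (you correctly identify that the dither is needed because lattice laws cannot approach a continuous law in TV), run the given Gaussian scheme on the dithered inputs, and conclude via continuity of entropy on a finite alphabet and of risk for a bounded estimator. Two points make the paper's execution simpler than yours. First, the paper does not renormalize by $\sqrt{1+\eta^2}$; it simply adds $a_t N,\,a_t N'$ to the raw normalized sums and argues that the result converges in TV directly to the $\rho$-correlated unit-variance Gaussian, so there is no correlation mismatch $\rho_\eta=\rho/(1+\eta^2)$ to repair afterwards. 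Second, in place of your quantitative Fourier/Edgeworth estimate (whose claimed polynomial-in-$1/\eta$ rate, uniform in $\rho$, you assert but do not prove), the paper uses the one-line bound
\[
D(\mu*\gamma_{a}\|\nu*\gamma_{a})\le \tfrac{1}{2a^2}\,W_2^2(\mu,\nu),
\]
which follows from convexity of relative entropy, together with the purely qualitative $W_2$-CLT and Pinsker's inequality; one then chooses $a_t\to0$ slowly enough that $a_t^{-2}W_2^2\to0$. This bypasses both the rate claim and the need for uniformity in $\rho$.

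One inaccuracy: the map $\rho'\mapsto P_{XY}^{\rho'}$ is \emph{not} TV-continuous on all of $[-1,1]$, since at $\rho'=\pm1$ the law is supported on a line and hence at TV-distance $1$ from every nondegenerate Gaussian. Your stated justification for the continuity of $\rho'\mapsto H_{\rho'}(\Pi)$ and $\rho'\mapsto R_{\rho'}(\Pi,\hat\rho)$ is therefore invalid at the endpoints; the conclusions can be rescued (and the paper's own argument has an analogous wrinkle there), but not via that route.
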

\begin{proof}
Let $(A_l,B_l)_{l=1}^t$ be an iid sequence of binary symmetric random variables with correlation $\rho$, and put 
\begin{align}
X^{(t)}&:=\frac{A_1+\dots+A_t}{\sqrt{t}}
+a_tN,\label{e_xt}
\\
Y^{(t)}&:=\frac{B_1+\dots+B_t}{\sqrt{t}}+a_tN',\label{e_yt}
\end{align}
where $N$ and $N'$ are standard Gaussian random variables, and $N$, $N'$, $(X^t,Y^t)$ are independent.
By the central limit theorem, 
we can choose some $a_t=o(1)$ such that
the distribution of $(X^{(t)},Y^{(t)})$ converges to the Gaussian distribution $P_{XY}$ in total variation (Proposition~\ref{prop_clt} below).
Now let $T=nt$ and suppose that $(A_l,B_l)_{l=1}^T$ is an iid  sequence of binary symmetric pairs. 
The above argument shows that Alice and Bob can process locally to obtain iid sequence of length $n$, which convergences to the  iid sequence of Gaussian pairs of correlation $\rho$ in the total variation distance.
After this preprocessing step, Alice and Bob can apply the given scheme $(\Pi,\hat{\rho})$.
Then \eqref{e50} follows since entropy is continuous w.r.t.\ the total variation on finite alphabets, and \eqref{e51} follows since we can assume without loss of generality that $\hat{\rho}$ is bounded.
Note that we have constructed $(\Pi^T,\hat{\rho}^T)$ only for $T$ equal to a multiple of $n$; however this restriction is obviously inconsequential.
\end{proof}
\begin{proposition}\label{prop_clt}
There exist $a_t=o(1)$ such that $X^{(t)}$ and $Y^{(t)}$ defined in \eqref{e_xt} and \eqref{e_yt} converges to the Gaussian distribution $P_{XY}$ in total variation.
\end{proposition}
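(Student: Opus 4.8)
The plan is to establish convergence in total variation by combining a local central limit theorem (LCLT) for the lattice sum with the smoothing provided by the small Gaussian perturbation $a_t N$. First I would note that the normalized sum $S_t \triangleq (A_1 + \dots + A_t)/\sqrt{t}$ (and similarly for $B_l$) is a lattice random variable supported on $\frac{1}{\sqrt t}(2\mathbb{Z} - t\cdot\mathbf{1}_{\{t\text{ odd}\}})$ with spacing $2/\sqrt{t}$. By the LCLT for i.i.d.\ lattice variables with finite variance, the point masses of $S_t$ are uniformly close to the Gaussian density scaled by the spacing; more precisely, $\sup_{s}\bigl|\tfrac{\sqrt t}{2} \PP[S_t = s] - \phi(s)\bigr| \to 0$, where $\phi$ is the standard normal density. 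The issue is that $S_t$ is discrete and $P_{XY}$ is absolutely continuous, so $S_t$ alone cannot converge in total variation — this is exactly why the perturbation $a_t N$ is needed.

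The key step is then to choose $a_t \to 0$ slowly enough (relative to the lattice spacing $2/\sqrt t$, e.g.\ $a_t = t^{-1/4}$, so that $a_t \sqrt{t} \to \infty$) that convolving $S_t$ with $a_t N$ smears each atom over many lattice cells and produces a density close in $L^1$ to $\phi$. Concretely, the density of $S_t + a_t N$ at a point $x$ is $\sum_s \PP[S_t = s]\, \tfrac{1}{a_t}\phi\bigl(\tfrac{x-s}{a_t}\bigr)$; substituting the LCLT estimate $\PP[S_t = s] \approx \tfrac{2}{\sqrt t}\phi(s)$ turns this into a Riemann sum for $(\phi * \tfrac{1}{a_t}\phi(\cdot/a_t))(x)$, which converges to $\phi(x)$ since $a_t \to 0$. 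The error terms are controlled because the LCLT error is uniform and the tails of $S_t$ are sub-Gaussian (so there is no mass escaping to infinity), giving $\|\,\mathrm{law}(X^{(t)}) - N(0,1)\,\|_{\mathrm{TV}} \to 0$ by Scheffé's lemma.

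Finally I would handle the joint distribution. The pair $(S_t, S_t')$ where $S_t' = (B_1+\dots+B_t)/\sqrt t$ satisfies a bivariate LCLT converging to the $\rho$-correlated bivariate Gaussian density (on the appropriate two-dimensional lattice, using that $(A_l,B_l)$ has the correct covariance matrix), and the independent perturbations $a_t(N,N')$ smooth this two-dimensional atomic distribution to a genuine density via the same Riemann-sum argument, now in $\mathbb{R}^2$. Hence $(X^{(t)},Y^{(t)}) \to P_{XY}$ in total variation.

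The main obstacle I anticipate is the bivariate LCLT bookkeeping: one must verify the correct lattice structure of $(A_l,B_l)$ (it is supported on the four points $(\pm1,\pm1)$, which generate a full-rank sublattice of $\mathbb{Z}^2$ unless $\rho = \pm 1$, a degenerate case that can be treated separately), check that Cramér's condition / strong non-lattice issues do not arise, and confirm that the covariance matrix of the rescaled sum is exactly $\begin{pmatrix}1 & \rho \\ \rho & 1\end{pmatrix}$ so the limiting density is the intended $P_{XY}$. Everything else — the uniformity of the LCLT error, the Riemann-sum convergence, and the passage from pointwise density convergence to total variation via Scheffé — is routine once the rate $a_t$ is fixed so that $a_t \to 0$ but $a_t/(\text{lattice spacing}) \to \infty$.
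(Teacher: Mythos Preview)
Your proposal is correct, but it follows a genuinely different route from the paper's proof. You go through a bivariate local CLT and a Riemann-sum smoothing argument, then invoke Scheff\'e to upgrade pointwise density convergence to total variation. The paper instead uses a transport--entropy inequality: for any two laws $\mu,\nu$ and independent Gaussians $N,N'$ of variance $a_t^2$, convexity of relative entropy gives
\[
D(\mu * \mathcal{N}(0,a_t^2 I)\,\|\,\nu * \mathcal{N}(0,a_t^2 I)) \le \frac{1}{2a_t^2}\,W_2^2(\mu,\nu),
\]
so it suffices that the normalized binary sums converge to the bivariate Gaussian in $W_2$, which is just the ordinary CLT together with uniform integrability of second moments. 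Choosing $a_t\to 0$ slowly relative to the $W_2$ rate, Pinsker's inequality yields convergence of $(X^{(t)},Y^{(t)})$ to $(X+a_tN,\,Y+a_tN')$ in total variation, and a final triangle inequality absorbs the $a_t$-perturbation.

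What each approach buys: the paper's argument is shorter and sidesteps exactly the obstacle you identify --- there is no need to verify the lattice structure of $(A_l,B_l)$, check a bivariate LCLT, or control Riemann-sum error terms; only the weak CLT with second-moment convergence is used. Your approach is more elementary in that it avoids Wasserstein distances and the transport inequality, and it gives explicit control of the smoothing scale (your condition $a_t\sqrt t\to\infty$ is the concrete analogue of the paper's ``choose $a_t$ vanishing slowly enough''). Both are valid; the paper's is the cleaner packaging.
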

\begin{proof}
By the convexity of the relative entropy,
we can upper bound the KL divergence by the Wasserstein 2 distance:
\begin{align}
&\quad D(X^{(t)},Y^{(t)}\|X+a_tN,Y+a_tN')
\nonumber
\\
&\le \frac{1}{2a_t^2}
W_2^2\left(\left[\frac{A_1+\dots+A_t}{\sqrt{t}},\frac{B_1+\dots+B_t}{\sqrt{t}}\right],[X,Y]\right)
\label{e_conv}
\end{align}
However, $\frac{A_1+\dots+A_t}{\sqrt{t}}$ and $\frac{B_1+\dots+B_t}{\sqrt{t}}$ converge to $P_{XY}$ under Wasserstein 2 distance,
since this is equivalent to convergence in distribution in the current context where a uniformly integrable condition is satisfied (see e.g.\ \cite[Theorem~7.12]{villani2003topics})
 \footnote{Alternatively, see \cite{murata1974inequality} for a direct proof of the central limit theorem under the Wasserstein metric.}.
Thus there exists $a_t=o(1)$ such that \eqref{e_conv} vanishes.
By Pinsker's inequality, this implies that $(X^{(t)},Y^{(t)})$ converges to the Gaussian distribution $(X+a_tN,Y+a_tN')$ in total variation.
However, as long as $a_t=o(1)$ we have that $(X+a_tN,Y+a_tN')$ converges to $(X,Y)$. 
The conclusion then follows by the triangle inequality of the total variation.
\end{proof}

\section{Proof of the upper bounds (Theorem~\ref{thm_ub})}\label{sec_ub}
Before the proof, let us observe the suboptimality of a naive scheme.
Consider the binary case for example (the Gaussian case is similar). Suppose that Alice just sends her first $k$ samples $X_1,\ldots,X_k$. This would let Bob, by computing the empirical average $\hat \rho_{\textnormal{emp}} = \tfrac{1}{k} \sum_j X_j Y_j$, achieve a risk of
\begin{align}
 \EE_\rho[|\rho - \hat \rho_{\textnormal{emp}}|^2] = \frac{1-\rho^2}{k}.
 \label{e_emp}
\end{align}

Clearly \eqref{e_emp} is not sufficient for the upper bounds in Theorem~\ref{thm_ub}.
To improve it, we now recall the ``max of Gaussian scheme'' in~\cite{hadar2018distributed}.
By a central limit theorem argument we can show that binary estimation is easier than the Gaussian counterpart (see Lemma~\ref{lem_reduction}).
Hence we only need to prove the achievability for the Gaussian case. Alice observes the first $2^k$ Gaussian samples, and transmits to Bob, using exactly $k$ bits, the \emph{index} $W$ of the maximal one, i.e. 
\begin{align}
    W = \argmax_{i \in [2^k]} X_i.
\end{align}
% \uri{$[\cdot]$ wasn't defined. Perhaps we should define it and replace all the $\{1,\ldots,r\}$'s is section 5} 
Upon receiving the index $W$, Bob finds his corresponding sample $Y_W$ and estimates the correlation using
\begin{align}\label{eq:maxEtmtr}
    \hat{\rho}_{\textnormal{max}} = \frac{Y_W}{\EE X_W}.
\end{align}
% In the binary case, the same risk can be achieved, since Alice and Bob can perform local preprocessing to obtain a new iid which converges to Gaussian by the central limit theorem, and then run the Gaussian scheme. 
% The argument is formalized in Lemma~\ref{lem_reduction} ahead.
% % (We remark that a related but different central limit theorem argument for the variable-rate setting was previously used in~\cite[Theorem 5]{hadar2018distributed}).\liu{please verify this} 
% In summary,
Recall the following result \cite{hadar2018distributed}, for which we reproduce the short proof and then explain how the local upper bound will follow with a modification of the proof. 
\begin{theorem}[\cite{hadar2018distributed}]\label{thrm:maxEstmtr}
The estimator $\hat{\rho}_{\textnormal{max}}$ is unbiased with 
\begin{align}\label{eq:maxEstmtrRisk}
     R_\rho\left(W,\hat \rho_{\textnormal{max}}\right) = \frac{1}{k} \left(\frac{1-\rho^2}{2 \ln 2}  + o(1) \right).
\end{align}
% Note that $\EE X_W$ depends only on $k$ and can be calculated at any given accuracy, or be approximated by $\sqrt{k \cdot 2 \ln 2}$
\end{theorem}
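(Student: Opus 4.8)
The plan is to analyze the two quantities $\EE_\rho[\hat\rho_{\textnormal{max}}]$ and $\Var_\rho(\hat\rho_{\textnormal{max}})$ separately, exploiting the joint Gaussianity of $(X_W, Y_W)$ conditioned on $W$. First I would note that by symmetry of the iid Gaussian samples, the index $W = \argmax_i X_i$ is uniform on $[2^k]$ and, conditionally on $W=i$, the pair $(X_i,Y_i)$ is distributed as (max of $2^k$ iid standard Gaussians, a $\rho$-correlated partner). Since $Y_i = \rho X_i + \sqrt{1-\rho^2}\,Z_i$ with $Z_i$ standard Gaussian independent of $X_i$ (hence independent of the event that $X_i$ is the max), we immediately get $\EE_\rho[Y_W \mid W] = \rho\, \EE_\rho[X_W \mid W] = \rho\, \EE[M_{2^k}]$, where $M_{2^k}$ denotes the maximum of $2^k$ iid standard Gaussians. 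This shows $\EE_\rho[\hat\rho_{\textnormal{max}}] = \rho$, i.e.\ unbiasedness, so the risk equals the variance.

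Next I would compute the variance. Using the same decomposition, $\hat\rho_{\textnormal{max}} = \rho + \sqrt{1-\rho^2}\, Z_W / \EE[M_{2^k}]$, and since $Z_W$ is a standard Gaussian independent of $W$ (again because whether $X_i$ attains the max does not involve $Z_i$), we get
\begin{align}
\Var_\rho(\hat\rho_{\textnormal{max}}) = \frac{1-\rho^2}{(\EE[M_{2^k}])^2}.
\end{align}
So the whole theorem reduces to the classical extreme-value estimate $\EE[M_{2^k}] = \sqrt{2\ln(2^k)}\,(1+o(1)) = \sqrt{2k\ln 2}\,(1+o(1))$ as $k\to\infty$; more precisely $\EE[M_m] = \sqrt{2\ln m} - \frac{\ln\ln m + \ln 4\pi}{2\sqrt{2\ln m}} + o(1/\sqrt{\ln m})$, of which I only need the leading term. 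Plugging $m = 2^k$ gives $(\EE[M_{2^k}])^2 = 2k\ln 2\,(1+o(1))$, hence $\Var_\rho(\hat\rho_{\textnormal{max}}) = \frac{1-\rho^2}{2k\ln 2}(1+o(1)) = \frac{1}{k}\big(\frac{1-\rho^2}{2\ln 2} + o(1)\big)$, as claimed.

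The only real content is the asymptotics of $\EE[M_{2^k}]$, which I would either cite from standard references on Gaussian order statistics or derive quickly: the lower bound $\EE[M_m] \gtrsim \sqrt{2\ln m}$ follows from a union/Borell-type concentration argument (or from $\Pr[M_m \le t] = \Phi(t)^m$ and a second-moment estimate), and the matching upper bound $\EE[M_m] \le \sqrt{2\ln m}(1+o(1))$ follows from the Gaussian tail bound $\Pr[M_m > t] \le m\,e^{-t^2/2}$ integrated. The main subtlety — and the reason the $o(1)$ in \eqref{eq:maxEstmtrRisk} is stated multiplicatively on $1/k$ rather than as an exact expression — is controlling the second-order term of $\EE[M_{2^k}]$, but since that term is $O(\sqrt{k}\log k / k)$ relative to the leading $\sqrt{2k\ln 2}$, it only contributes at the $o(1/k)$ level after squaring and inverting, so no delicate estimate is needed beyond the leading-order extreme-value asymptotics. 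For the subsequent (local) upper bound of Theorem~\ref{thm_ub} I would then modify the transmission step: instead of sending $W$ with a flat $k$ bits, Alice entropy-codes $W$ using Bob's side information $Y^{2^k}$ together with the approximately known value of $\rho$, reducing the description length and thereby allowing a larger pool $2^{k(1+o(1))/(1-\rho^2)}$ of samples, which improves the prefactor from $1-\rho^2$ to $(1-\rho^2)^2$; but that argument is separate from the statement at hand.
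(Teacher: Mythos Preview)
Your decomposition $\hat\rho_{\textnormal{max}} = \rho + \sqrt{1-\rho^2}\, Z_W / \EE[M_{2^k}]$ is incorrect. From $Y_W = \rho X_W + \sqrt{1-\rho^2}\,Z_W$ you get
\[
\hat\rho_{\textnormal{max}} = \frac{Y_W}{\EE X_W} = \rho\,\frac{X_W}{\EE X_W} + \frac{\sqrt{1-\rho^2}\,Z_W}{\EE X_W},
\]
and $X_W/\EE X_W$ is not identically $1$. Consequently your variance formula $\Var_\rho(\hat\rho_{\textnormal{max}}) = (1-\rho^2)/(\EE[M_{2^k}])^2$ is false as an exact identity; the correct expression, using independence of $X_W$ and $Z_W$, is
\[
\Var_\rho(\hat\rho_{\textnormal{max}}) = \frac{\rho^2\,\Var X_W + (1-\rho^2)}{(\EE X_W)^2},
\]
which is exactly what the paper writes. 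To conclude you therefore also need a bound on $\Var X_W$, not just on $\EE X_W$; the paper invokes the extreme-value fact $\Var M_{2^k} = O(1/\ln(2^k)) = O(1/k)$, so that the extra term $\rho^2 \Var X_W / (\EE X_W)^2 = O(1/k^2)$ is absorbed into the $o(1/k)$. Your claim that ``the only real content is the asymptotics of $\EE[M_{2^k}]$'' thus misses a necessary ingredient: the (super)concentration of the Gaussian maximum. Once you add that bound, your argument matches the paper's.
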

\begin{proof}
%[Proof of \eqref{eq:maxEstmtrRisk}]
It is easy to check that $\hat{\rho}_{\textnormal{max}}$ is unbiased. In order to compute its variance, we need to compute the mean and variance of $X_W$, which is the maximum of $2^k$ iid standard normal r.v.s. From extreme value theory (see e.g. \cite{david2004order}) applied to the normal distribution, we obtain
	\begin{align}
	&\E X_W = \sqrt{2 \ln (2^k)}(1 + o(1))\\
	&\E X_W^2 = 2 \ln{(2^k)} (1 + o(1))\\
	&\Var X_W = O \left( \frac{1}{\ln (2^k)} \right).
	\end{align}
Therefore, for $Z \sim \mathcal{N}(0,1)$ we have that 
	 	\begin{align}
\Var \hat{\rho}_{\textnormal{max}} 
&= \frac{1}{(\E X_W)^2} \Var (\rho X_W + \sqrt{1-\rho^2}Z) \\
&= \frac{1}{(\E X_W)^2} (\rho^2 \Var X_W + 1-\rho^2) \\
&= \frac{1}{2k \ln 2} (1-\rho^2 + o(1)).
\end{align}
\end{proof}

Taking $\rho=0$ in \eqref{eq:maxEstmtrRisk} establishes the global upper bound \eqref{eq:glblMinMx}.
However, achieving the local risk upper bound in \eqref{eq:interactive_local_risk_UB} is trickier, since a direct application of \eqref{eq:maxEstmtrRisk} is loose by a factor of $(1-\rho^2)$. 
The trick is to send the index $W$ more efficiently using the side information. More precisely, Alice looks for the maximum sample out of $2^k$ samples as before. Bob sifts his corresponding samples, marking only those where $Y_k >  \rho\cdot \sqrt{k\cdot 2\ln 2}\cdot (1-o(1))$. 
Note that here $\rho$ is as in the definition of the local risk \eqref{e_local}, and the true correlation is within $o(1)$ error to $\rho$. It is easy to check that with sufficiently high probability (sufficiently here (and below) meaning that the complimentary probability has a negligible effect on the ultimate variance), there are $2^{k(1-\rho^2) (1+o(1))}$ such marked samples that also include the one corresponding to Alice's maximum. Also, by symmetry these marked samples are uniformly distributed among the total $2^k$ samples. Hence, Alice can describe the $k\cdot (1-\rho^2)\cdot (1+o(1))$ most significant bits (say) of the index of her maximal sample, which will reveal this index to Bob with sufficiently high probability. This yields a $(1-\rho^2)$ factor saving in communication, and the claim follows. 

\begin{remark}
We note that the above risk can also be achieved directly in Hamming space (without appealing to the CLT). Alice sets some parameter $\tilde{\rho}\in[-1,1]$ to be optimized later, and partitions her data to $m$ blocks of size $n$. She then finds the first block whose sum is exactly $n\tilde{\rho}$ (recall the samples are in $\{-1,1\}$), which exists with sufficiently high probability for $m=2^{n(\frac{1}{2}-h(\frac{1-\tilde{\rho}}{2})+o(1))}$ (otherwise, she picks the first block). Bob sifts his corresponding blocks, marking only those with sum $n\rho\tilde{\rho}(1+o(1))$. Alice encodes the index of her chosen block using $\log m=n(\frac{1}{2}-h(\frac{1-\tilde{\rho}}{2})+o(1))$ bits, and sends only the $n(h(\frac{1-\rho\tilde{\rho}}{2}) - h(\frac{1-\tilde{\rho}}{2}))$ most significant bits, so that Bob can resolve the index with sufficiently high probability. Bob then finds the sum of his corresponding block, and divides it by $n\tilde{\rho}$ to obtain his estimator for $\rho$. It is straightforward to check that this procedure results in a variance of $\frac{1}{k}\cdot \left((1-\rho^2) (h(\frac{1-\rho\tilde{\rho}}{2}) - h(\frac{1-\tilde{\rho}}{2}))/ \tilde{\rho}^2 + o(1)\right)$, where $h(q) = -q\log_2{q}-(1-q)\log_2(1-q)$ is the binary entropy function. Optimizing over $\tilde{\rho}$ yields that $\tilde{\rho}\to 0$ is (not surprisingly) optimal, and the obtained variance is the same as the one achieved by the modified Gaussian maximum estimator above. 
\end{remark}

\section{Proof of the lower bounds in Theorem~\ref{thm_lb} (one-way case)}\label{sec_lb1}
In this section we prove the lower bounds on the global and local one-way risks, $R^{*1}$ and $R^{*1}_{\rho, \delta}$, in the binary case. 
The Gaussian case will then follow from the central limit theorem argument in Lemma~\ref{lem_reduction}.
Of course, the one-way lower bound is a special case of the interactive case in  Section~\ref{sec_interactive};
we separate the discussion simply because the one-way case is conceptually easily and the proof does not need the symmetric strong data processing inequality (modulus certain technical issues pertaining the continuity of Fisher information which we will discuss).

We note that in the one-way setting, the following Markov chain holds:
\begin{align}
    \Pi - \vect{X}-\vect{Y}.
\end{align}
Note that regardless of $\rho$ the marginal distribution of $\vect{X}$ (and thus of $\Pi$) is the same. Let $P_{\Pi \vect{Y}}^{\rho}$ denote the joint distribution of $(\Pi,\vect{Y})$ when the correlation is equal to $\rho$. Note that under $\rho=0$ we have that $\Pi$ and $\vect{Y}$ are independent. Thus, via~\eqref{eq:midkl} we obtain
\begin{align}\label{eq:kl_onew}
     D(P_{\Pi \vect{Y}}^{\rho} \| P_{\Pi \vect{Y}}^{0}) = I(\Pi; \vect{Y})\,.
\end{align}
Furthermore, from~\eqref{eq:sdpi_pre} we get 
\begin{align}\label{eq:oneWaymutInfBound1}
    I(\Pi; \vect{Y}) \le \rho^2 I(\Pi; \vect{X}) \le \rho^2 H(\Pi) \le \rho^2 k.
\end{align}
Thus using the connection between the KL divergence and the Fisher information in
Lemma~\ref{lem:local_div_fisher}, we obtain
\begin{equation}%\label{eq:fishzero}
    \fisher(\Pi,\vect{Y}; \rho=0) \le {k 2\ln 2}\,. 
\end{equation}

Now, suppose that we can show a continuity result for the Fisher information at $\rho=0$, in the sense of 
\begin{align}
    \limsup_{\rho\to0}\sup_{\Pi}\fisher(\Pi,\vect{Y}; \rho) \le {k 2\ln 2}\,
    \label{e_cont}
\end{align}
then a standard application of the Bayesian Cram\'er-Rao bound would imply the global risk.
Indeed, applying Lemma~\ref{lem:bcrlb} with (e.g.) the prior specified in \eqref{eq:commonPrior} over $\mathcal{J} = [\rho - \delta, \rho+\delta]$, we obtain
\begin{align}
    R^{*1} \ge \frac{1}{k} \left( \frac{1}{2 \ln 2} - o(1) \right),
\end{align}
for $\delta \in o(1)\cap\omega(1/\sqrt{k})$,
establishing \eqref{eq:glblMinMx1} for the special case of one-way protocols.

While the continuity claim in \eqref{e_cont} is intuitive enough, to rigorously show it we need to resort to a device to be discussed in Section~\ref{sec_cr}, 
which will allow us to reduce the problem of testing against an arbitrary $\rho$ to testing against independence.
Specifically, in
Corollary~\ref{crlry:anyTwoCorrs} we will show that using common randomness this can be generalized to yield
\begin{align}
    D(P_{\Pi \vect{Y}}^{\rho_1} \| P_{\Pi \vect{Y}}^{\rho_0}) \le \left(\frac{\rho_1-\rho_0}{1-|\rho_0|}\right)^2 k
\end{align}
for any $\rho_1 \in [-1,1]$ and $\rho_0 \in [\tfrac{\rho_1-1}{2} , \tfrac{\rho_1+1}{2}]$. Again applying Lemma \ref{lem:local_div_fisher}, we obtain
\begin{equation}%\label{eq:fishgeneral}
    \fisher(\Pi,\vect{Y}; \rho) \le \frac{2 k \ln 2}{(1 - |\rho|)^2}
\end{equation}
for any $\rho \in (-1,1)$.
This justifies the continuity claim \eqref{e_cont}.
Moreover,
applying the Bayesian Cram\'er-Rao (Lemma~\ref{lem:bcrlb}) with (e.g.) the prior specified in \eqref{eq:commonPrior} over $\mathcal{J} = [\rho - \delta, \rho+\delta]$, we obtain
\begin{align}
    R^{*1}_{\rho, \delta} \ge \frac{1}{k} \left( \frac{(1 - |\rho|)^2}{2 \ln 2} - o(1) \right)
\end{align}
which is the desired local risk lower bound for the special case of one-way protocols.

%Next, we aim to apply the lower bound~\eqref{eq:bcrlb}. Let $\mathcal{J}=[-\delta,\delta]$. The minimax risk over $\mathcal{J}$ is clearly lower bounded by any Bayesian risk with support on $\mathcal{J}$, i.e.,
%\begin{align}
%	\inf_{\hat{\rho}} \sup_{\rho \in \mathcal{J}} R_\rho(\hat{\rho}) & = \inf_{\hat{\rho}} \sup_{\lambda:\mathrm{supp}(\lambda) \subseteq \mathcal{J}} \E_{\rho\sim\lambda}R_\rho(\hat{\rho})\\
%	&\geq \sup_{\lambda:\mathrm{supp}(\lambda) \subseteq \mathcal{J}} \inf_{\hat{\rho}} \E_{\rho\sim\lambda}R_\rho(\hat{\rho})
%\end{align}
%We select the prior as
%	\begin{align}\label{eq:prior}
%		\lambda(\rho) = \frac{15}{16}\frac{(\delta^2 - \rho^2)^2}{\delta^5}\cdot\mathds{1}(|\rho|\leq \delta), 
%	\end{align}
%	which satisfies 
%	\begin{align}\label{eq:priorFish}
%	\mathrm{I}^\lambda = 10/\delta^2.
%	\end{align}
%	Plugging into~\eqref{eq:bcrlb} and recalling \eqref{eq:fishzero}, we have that 
%	\begin{align}
%		\inf_{\hat{\rho}}\sup_{\rho} R_\rho(\hat{\rho}) &\geq \inf_{\hat{\rho}}\sup_{|\rho|\leq \delta} R_\rho(\hat{\rho}) \\
%		&\geq \inf_{\hat{\rho}}\E_{\rho\sim\lambda} R_\rho(\hat{\rho}) \\
%		&\geq \left(k\cdot 2\ln 2 + \max_{|\rho|\leq \delta}|I(\rho)-I(0)| + \frac{10}{\delta^2}\right)^{-1}.
%	\end{align}
%	The required lower bound now follows by taking, say, $\delta=k^{-1/4}$, and using the continuity of the Fisher information \uri{explain why this holds}. 

% In Section~\ref{sec:interact} the ideas above will be extended to interactive communication. 

\section{Proof of lower bounds in Theorem~\ref{thm_lb} (interactive case)}\label{sec_interactive}
For the interactive case, our approach is again to upper bound the KL divergence between the distributions of the r.v.s available (to either Alice or Bob) under $\rho\neq 0$, and under $\rho=0$. 
This is accomplished by Theorem~\ref{thm5} and Theorem~\ref{thm6} below, which can be viewed as generalizations of \eqref{eq:kl_onew} and \eqref{eq:oneWaymutInfBound1}.

\def\vX{\vect{X}}
\def\vY{\vect{Y}}
\def\vZ{\vect{Z}}
\begin{theorem}\label{thm5} Consider an arbitrary interactive protocol $P_{\Pi|\vX \vY}$ and let $P_{\vX \vY \Pi}$ be the induced joint distribution. Let $\bar P_{\vX \vY \Pi} = P_{\vX} \times P_{\vY} \times P_{\Pi|\vX \vY}$ be the joint distribution induced by the same protocol, but when the $\vX$ and $\vY$ are taken to be independent (but with same marginals). Then 
\begin{align}\label{eq:dtoic}
\max\{D(P_{\Pi  \vX}\|\bar P_{\Pi  \vX}),\,
D(P_{\Pi \vY}\| \bar P_{\Pi \vY})\}
\le 
I(\vX;\vY)-I(\vX;\vY|\Pi), 
\end{align}
where information quantities are computed with respect to $P_{\vX \vY \Pi}$.
Moreover, the bound \eqref{eq:dtoic} continues to hold also when the protocol $\Pi$ contains an arbitrary common randomness (i.e.\ public coin) $W_0$ independent of $(\vX,\vY)$.
\end{theorem}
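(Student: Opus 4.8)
The plan is to prove the inequality by induction on the number of rounds, exploiting the interactive Markov structure \eqref{e_markov1}--\eqref{e_markov2}. First I would establish the chain-rule decomposition $I(\vX;\vY) - I(\vX;\vY\mid\Pi) = \sum_{r} \big[ I(\vX;\vY\mid W^{r-1}) - I(\vX;\vY\mid W^{r}) \big]$, and interpret each summand: in an odd round Alice sends $W_r = f_r(\vX, W^{r-1})$ with $W_r - (\vX,W^{r-1}) - \vY$, so the term equals $I(W_r;\vY\mid W^{r-1}) - 0$ up to a standard manipulation, i.e.\ it is exactly the ``information interchanged'' quantity; symmetrically for even rounds. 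On the other side, $D(P_{\Pi\vX}\Vert \bar P_{\Pi\vX})$ should be telescoped over rounds as well: conditioning on $\vX$, the messages Alice sends contribute nothing to the divergence (they are deterministic functions of things already known, so $P$ and $\bar P$ agree), whereas each message Bob sends contributes a term that I would like to bound by the corresponding summand above.

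Concretely, the key identity I expect to need is that for a message $W_r$ sent by Bob (even $r$), writing $V = W^{r-1}$, one has under $P$ the Markov chain $W_r - (\vY, V) - \vX$, hence $P_{\vX\mid W^r} = P_{\vX \mid V, W_r}$ and we may compare to $\bar P_{\vX \mid W^r}$. The increment in the divergence $D(P_{\Pi\vX}\Vert\bar P_{\Pi\vX})$ at round $r$ should work out (via the radius/golden-formula property \eqref{eq:mirad} and the fact that $\bar P$ makes $\vX$ independent of the transcript when $\vX\perp\vY$) to be at most $I(W_r;\vX\mid V) $ computed under... no — rather, it should telescope so that the \emph{total} divergence is at most $\sum_{r\ \mathrm{even}} \big[I(\vX;\vY\mid W^{r-1}) - I(\vX;\vY\mid W^{r})\big]$, which is in turn dominated by the full sum over all $r$ since each odd-round increment $I(\vX;\vY\mid W^{r-1}) - I(\vX;\vY\mid W^{r}) = I(W_r;\vY\mid W^{r-1}) \ge 0$ is nonnegative by the data-processing structure. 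The symmetric statement for $D(P_{\Pi\vY}\Vert\bar P_{\Pi\vY})$ follows by swapping the roles of Alice and Bob (and of odd/even rounds).

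I anticipate the main obstacle to be making the per-round divergence increment argument rigorous: one must carefully track that $\bar P_{\Pi\mid\vX\vY}$ is the \emph{same} kernel as $P_{\Pi\mid\vX\vY}$, so that the only discrepancy between $P_{\Pi\vX}$ and $\bar P_{\Pi\vX}$ comes from the correlation $\vX\leftrightarrow\vY$ leaking through Bob's messages, and to express the increment as a genuine (nonnegative) conditional KL term that matches a mutual-information increment. The cleanest route is probably to write, for the divergence between $P_{W^r \vX}$ and $\bar P_{W^r \vX}$, the recursion
\begin{align}
D(P_{W^r\vX}\Vert \bar P_{W^r\vX})
= D(P_{W^{r-1}\vX}\Vert \bar P_{W^{r-1}\vX}) + \mathbb{E}\big[ D(P_{W_r\mid W^{r-1}\vX}\Vert \bar P_{W_r\mid W^{r-1}\vX}) \big],
\end{align}
noting the conditional term vanishes when Alice speaks (round $r$ odd) and, when Bob speaks, equals $I(\vX; W_r \mid W^{r-1})$ under $P$ minus a nonnegative quantity — which then gets absorbed because $I(\vX;W_r\mid W^{r-1}) \le I(\vX;\vY\mid W^{r-1})$ is exactly an instance of the data-processing inequality applied to $\vX - (\vY,W^{r-1}) - W_r$. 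Summing the recursion and telescoping the mutual-information terms yields \eqref{eq:dtoic}. Finally, adjoining a public coin $W_0$ independent of $(\vX,\vY)$ changes nothing: condition on $W_0 = w_0$, apply the bound, and average, using that conditioning on an independent variable leaves all the relevant marginals and the identity $I(\vX;\vY\mid\Pi,W_0)$ intact.
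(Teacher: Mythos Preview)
Your inductive approach contains a sign error that breaks the argument. The chain-rule recursion
\[
D(P_{W^r\vX}\Vert \bar P_{W^r\vX})
= D(P_{W^{r-1}\vX}\Vert \bar P_{W^{r-1}\vX}) + \mathbb{E}_{P}\bigl[ D(P_{W_r\mid W^{r-1}\vX}\Vert \bar P_{W_r\mid W^{r-1}\vX}) \bigr]
\]
is correct, and the odd-round increment does vanish. But in an even round (Bob speaks) the rectangle property under $\bar P$ gives $\bar P_{W_r\mid W^{r-1}\vX}=\bar P_{W_r\mid W^{r-1}}$, and then the standard decomposition (the ``golden formula'') yields
\[
\mathbb{E}_{P}\bigl[ D(P_{W_r\mid W^{r-1}\vX}\Vert \bar P_{W_r\mid W^{r-1}}) \bigr]
= I(\vX;W_r\mid W^{r-1}) \;+\; D\bigl(P_{W_r\mid W^{r-1}}\,\Vert\,\bar P_{W_r\mid W^{r-1}}\,\big|\,P_{W^{r-1}}\bigr),
\]
i.e.\ the increment equals $I(\vX;W_r\mid W^{r-1})$ \emph{plus} a nonnegative term, not minus. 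A one-line counterexample: take $n=1$, $W_1=X_1$, $W_2=Y_1$. Then $I(\vX;W_2\mid W_1)=I(X_1;Y_1\mid X_1)=0$, yet the round-$2$ divergence increment equals $D(P_{X_1Y_1}\Vert P_{X_1}P_{Y_1})=I(X_1;Y_1)>0$. So the per-round comparison you propose goes in the wrong direction, and summing over even rounds does \emph{not} give $D(P_{\Pi\vX}\Vert\bar P_{\Pi\vX})\le\sum_{r\text{ even}}I(\vX;W_r\mid W^{r-1})$.

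The paper sidesteps this by a one-shot argument rather than induction over rounds: it rewrites
\[
I(\vX;\vY)-I(\vX;\vY\mid\Pi)=I(\vX;\Pi)+I(\vY;\Pi)-I(\vX,\vY;\Pi)
\]
as a single expectation of a log-ratio and manipulates it into the form
\[
D(P_{\vX\Pi}\Vert\bar P_{\vX\Pi}) \;+\; \mathbb{E}_P\!\left[\log \frac{P_{\vY\mid\Pi}(\vY\mid\Pi)}{\bar P_{\vY\mid\vX\Pi}(\vY\mid\vX,\Pi)}\right],
\]
the second term being nonnegative because under $\bar P$ one has the Markov chain $\vX-\Pi-\vY$ (the rectangle property applied to the \emph{full} transcript, not round by round). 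Equivalently, this is the identity
\[
D(P_{\Pi\vX}\Vert\bar P_{\Pi\vX})+D(P_{\Pi\vY}\Vert\bar P_{\Pi\vY})-D(P_\Pi\Vert\bar P_\Pi)=I(\vX;\vY)-I(\vX;\vY\mid\Pi),
\]
combined with $D(P_\Pi\Vert\bar P_\Pi)\le D(P_{\Pi\vY}\Vert\bar P_{\Pi\vY})$. If you want to rescue the inductive route, you would need to carry all three quantities $D(P_{W^r\vX}\Vert\bar P_{W^r\vX})$, $D(P_{W^r\vY}\Vert\bar P_{W^r\vY})$, $D(P_{W^r}\Vert\bar P_{W^r})$ through the recursion simultaneously; the extra nonnegative terms then cancel across the three recursions rather than round by round. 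Your handling of the public coin $W_0$ is correct and matches the paper's.
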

% \begin{theorem}\label{thm5}
% Consider an arbitrary source pair $(\vX,\vY)$ and common randomness $\vZ$ independent of the source pair.
% Let $P_{\Pi|\vX \vY \vZ}$ be an arbitrary interactive protocol which may use the common randomness. 
% Let $P_{\vX \vY \vZ \Pi}=P_{\vZ}P_{\vX \vY}P_{\Pi|\vX \vY \vZ}$ be the induced joint distribution. 
% Let $\bar P_{\vX,\vY,\Pi} = P_{\vX}P_{\vY}P_{\vZ}P_{\Pi|\vX,\vY,\vZ}$ be the joint distribution induced by the same protocol, but $\vX$, $\vY$ and $\vZ$ are independent (but with same marginals). Then 
% \begin{align}\label{eq:dtoic}
% \max\{D(P_{\Pi,\vX,\vZ}\|
% \bar{P}_{\Pi,\vX,\vZ}),\,
% D(P_{\Pi,\vY,\vZ}\|
% \bar{P}_{\Pi,\vY,\vZ})
% \}
% \le 
% I(\vX;\vY)-I(\vX;\vY|\Pi), 
% \end{align}
% where (conditional) mutual information quantities are computed with respect to $P_{\vX,\vY,\vZ,\Pi}$.
% \end{theorem}

By saying that the protocol contains common randomness we mean that $\Pi=(W_0,W_1,\dots,W_r)$ where $W_0$ is the common randomness and $W_1,\dots,W_r$ are the exchanged messages. 
The extension to the case of common randomness will be useful in Section~\ref{sec_cr} where we reduce the problem of testing against an arbitrary $\rho$ to testing against independence.
\begin{proof}[Proof of Theorem~\ref{thm5}]
First, since 
\begin{align}
D(P_{\Pi\vX W_0}\|\bar{P}_{\Pi\vX W_0})=D(P_{\Pi\vX|W_0}\|\bar{P}_{\Pi\vX|W_0}|P_{W_0}),
\end{align}
it suffices to prove the same upper bound for 
\begin{align}
    D(P_{\Pi\vX|W_0=w_0}\|\bar{P}_{\Pi\vX|W_0=w_0}).
\end{align}
In other words, it suffices to prove the theorem for the case where the common randomness $W_0$ is empty.
Under this assumption, note that the RHS of~\eqref{eq:dtoic} is equal to
\begin{align}
    & I(\vX;\vY)-I(\vX;\vY|\Pi) \nonumber\\
    {} &= I(\vX; \Pi) + I(\vY; \Pi) - I(\vX, \vY; \Pi)\\
                &= \EE\left[ \log {P_{\vX  \Pi}(\vX,\Pi)  P_{\vY  \Pi}(\vY,\Pi) P_{\vX \vY}(\vX,\vY) P_\Pi(\Pi) \over P_{\vX}(\vX) P_\Pi(\Pi) P_{\vY}(\vY) P_\Pi(\Pi) P_{\vX  \vY  \Pi}(\vX, \vY, \Pi)} \right]\\
                & = \EE\left[ \log {P_{\vX  \Pi}(\vX,\Pi)  P_{\vY|\Pi}(\vY|\Pi)
                \over \bar P_{\vX  \vY  \Pi}(\vX,\vY,\Pi)} \right]\\
                & = \EE\left[ \log {P_{\vX  \Pi}(\vX,\Pi)  \over \bar P_{\vX  \Pi}(\vX,\Pi)} + \log {P_{\vY|\Pi}(\vY|\Pi)
                \over  \bar P_{\vY|\vX \Pi}(\vY|\vX,\Pi)} \right]\label{eq:dtoic2}\\
                & = D(P_{\vX  \Pi} \| \bar P_{\vX  \Pi}) + \EE\left[ \log {P_{\vY|\Pi}(\vY|\Pi)
                \over \bar P_{\vY|\vX \Pi}(\vY|\vX,\Pi)} \right]\\
                & \ge D(P_{\vX \Pi} \| \bar P_{\vX  \Pi})\,.\label{eq:dtoic1}
\end{align}
where all expectations are taken with respect to $P_{\vX \vY \Pi}$ and the last step is by non-negativity of divergence
 $D(P_{\vY|\Pi=\pi} \| \bar P_{\vY|\Pi=\pi, \vX = \vect{x}})$
for all $\pi,\vect{x}$,
which in turn uses the Markov chain $\vX-\Pi-\vY$ under $\bar{P}_{\vX\Pi\vY}$. In all,~\eqref{eq:dtoic1} proves part of~\eqref{eq:dtoic}. To prove the same bound on $D(P_{\Pi \vY}\| \bar P_{\Pi \vY})$ we can argue by symmetry (it may seem that symmetry is broken by the fact that $\vX$ sends $W_1$ first, but this is not true: $W_1$ can be empty), or just perform a straightforward modification of step~\eqref{eq:dtoic2}.
% \iffalse
% Given a protocol for computing $W_1,\dots,W_r$, where $r\in 2\mathbb{Z}$.
% Let $P_{X^nY^nW^r}$ be the true distribution of the random variables.
% Then construct the following distributions
% \begin{align}
% P^{(r)}&:=P_{Y^nW^r}\cdot P_{X^n|W^r}
% \\
% P^{(r-1)}&:= P_{X^nW^{r-1}}\cdot P_{Y^n|W^{r-1}}\cdot 
% P_{W_r|X^nY^nW^{r-1}}
% \\
% P^{(r-2)}&:= P_{Y^nW^{r-2}}\cdot P_{X^n|W^{r-2}}\cdot 
% P_{W_{r-1}^r|X^nY^nW^{r-2}}
% \\
% &\dots
% \\
% P^{(1)}&:= P_{X^nW_1}\cdot P_{Y^n|W_1}\cdot P_{W_2|X^nY^nW_2^r}
% \\
% P^{(0)}&:= P_{X^n}\cdot P_{Y^n} \cdot P_{W^r|X^nY^n}.
% \end{align}
% Then
% \begin{align}
% D(P_{X^nW^r}\|P^{(0)}_{X^nW^r})
% &=D(P^{(r)}_{X^nW^r}\| P^{(0)}_{X^nW^r})
%  \\
% &\le D(P^{(r)}\| P^{(0)})
% \\
% &= D(P^{(r)}\| P^{(1)}) 
% + \mathbb{E}_{P^{(r)}}\left[\log\frac{{\rm d}P^{(1)}}{{\rm d}P^{(0)}} \right]
% \\
% &= D(P^{(r)}\| P^{(1)}) 
% + \mathbb{E}_{P^{(r)}}\left[\log\frac{{\rm d}P^{(1)}_{Y^nW_1}}{{\rm d}P^{(0)}_{Y^nW_1}} \right]
% \\
% &=D(P^{(r)}\| P^{(1)}) 
% + \mathbb{E}_{P^{(1)}}\left[\log\frac{{\rm d}P^{(1)}_{Y^nW_1}}{{\rm d}P^{(0)}_{Y^nW_1}} \right]
% \\
% &=D(P^{(r)}\| P^{(1)}) + I(W_1;Y^n)
% \\
% &=D(P^{(r)}\| P^{(2)}) + I(W_2;X^n|W_1)+ I(W_1;Y^n)
% \\
% &\dots
% \\
% &=I(W_r;X^n|W^{r-1})+ \dots + I(W_2;X^n|W_1)+I(W_1;Y^n).
% \label{e18}
% \end{align}
% where
% \begin{itemize}
%  \item When we write $P^{(0)}$, etc., without any subscript, we mean the distribution of all the variables $X^nY^nW^r$.
%  \item The last two steps follow by induction.
% \end{itemize}
% \fi
\end{proof}

\begin{remark} It can be seen that for a one-way protocol we have equality in~\eqref{eq:dtoic}. This explains why our impossibility bound can be essentially achieved by a one-way protocol (e.g., see Theorem~\ref{thrm:maxEstmtr}), and suggests that this is the only possibility.
\end{remark}

\begin{remark}
After completion of this work, we found out that in a slightly different form Theorem~\ref{thm5} has previously appeared in \cite[Equation (4)]{xiang2013interactive}. Our proof is slightly simpler.
\end{remark}

\begin{theorem}\label{thm6}
Let $\Pi$ be any interactive protocol, possibly containing a common randomness $W_0$, in either the Gaussian or the binary symmetric case. Then    
\begin{align}
I(\mathbf{X;Y})-I(\mathbf{X;Y}|\Pi)   
\le
\rho^2 I(\Pi; \mathbf{X,Y}). 
\label{e67}
\end{align}
\end{theorem}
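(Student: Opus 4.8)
The plan is to derive \eqref{e67} from the round‑by‑round symmetric SDPI \eqref{e_ssdpi}–\eqref{e_markov2} — schematically, ``information interchanged $\le \rho^2\times$ information injected'' — together with some elementary bookkeeping identities; indeed \eqref{e67} is exactly the SSDPI written in the collapsed form $I(\vX;\vY)-I(\vX;\vY\mid\Pi)\le s_\infty\, I(\Pi;\vX,\vY)$ once we show $s_\infty=\rho^2$. First I would dispose of the common randomness: writing $\Pi=(W_0,W_1,\dots,W_r)$ with $W_0$ independent of $(\vX,\vY)$, one has $I(\vX;\vY\mid W_0)=I(\vX;\vY)$ and $I(\Pi;\vX,\vY)=\EE_{W_0}I(W^r;\vX,\vY\mid W_0=w_0)$, so averaging over $w_0$ reduces \eqref{e67} to the case of an empty $W_0$. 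Next, telescoping $I(\vX;\vY)-I(\vX;\vY\mid W^r)=\sum_{j=1}^r\bigl(I(\vX;\vY\mid W^{j-1})-I(\vX;\vY\mid W^{j})\bigr)$ and using the Markov constraints \eqref{e_markov1}–\eqref{e_markov2}, a short computation (in the spirit of the proof of Theorem~\ref{thm5}) shows that for odd $j$ the $j$‑th telescoped increment equals $I(\vY;W_j\mid W^{j-1})$ while $I(W_j;\vX,\vY\mid W^{j-1})=I(W_j;\vX\mid W^{j-1})$, and symmetrically for even $j$. Summing over $j$ — and passing to $r\to\infty$ for protocols with infinitely many rounds, which is harmless since $H(\Pi)\le k<\infty$ keeps every term finite — this rewrites \eqref{e67} as $R\le\rho^2 S$, where
\begin{align}
R &:=\textstyle\sum_{j\text{ odd}}I(\vY;W_j\mid W^{j-1})+\sum_{j\text{ even}}I(\vX;W_j\mid W^{j-1}),\\
S &:=\textstyle\sum_{j\text{ odd}}I(\vX;W_j\mid W^{j-1})+\sum_{j\text{ even}}I(\vY;W_j\mid W^{j-1}).
\end{align}

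Now fix $j$ and condition on $W^{j-1}=w^{j-1}$. An induction on the rounds — each $W_i$ being a (possibly stochastic) function of $W^{i-1}$ and of $\vX$ (for odd $i$) or of $\vY$ (for even $i$) — shows that the conditional law $P_{\vX\vY\mid W^{j-1}=w^{j-1}}$ has density $a(\vect{x})\,b(\vect{y})$ relative to $P_{XY}^{\otimes n}$, i.e.\ it is a \emph{marginally tilted} version of the i.i.d.\ source, of precisely the form appearing in Lemma~\ref{lem_symmetric} (in the Gaussian case the same holds with densities, requiring only a routine measure‑theoretic version of this induction). For odd $j$ we moreover have the Markov chain $W_j-\vX-\vY$ under $P_{\cdot\mid W^{j-1}=w^{j-1}}$, so the required per‑round bound is $I(\vY;W_j\mid w^{j-1})\le\rho^2 I(\vX;W_j\mid w^{j-1})$, and for even $j$ it is the mirror inequality $I(\vX;W_j\mid w^{j-1})\le\rho^2 I(\vY;W_j\mid w^{j-1})$. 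Both are instances of the symmetric SDPI for the \emph{$n$‑dimensional} $\rho$‑correlated source under an arbitrary marginal tilt. Averaging over $w^{j-1}$ and summing over $j$ then yields $R\le\rho^2 S$, completing the proof.

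Thus everything reduces to the vector ($n$‑dimensional) symmetric SDPI with constant $\rho^2$. In the binary case this is $s_\infty(P_{XY}^{\otimes n})=\rho^2$, established in \cite{liu2016common} (the vector form of Lemma~\ref{lem_symmetric}, obtained there from the scalar statement via a tensorization argument through maximal correlation / the ``marginally convex envelope''); in the Gaussian case it is the new statement whose proof is deferred to Section~\ref{sec_proof_ssdpi}. I expect this last ingredient to be the main obstacle: one cannot simply tensorize the scalar SDPI coordinatewise, because the tilt $a(\vect{x})b(\vect{y})$ produced by the transcript need not be a product over coordinates, so a genuine $n$‑dimensional argument is needed to certify that $\rho^2$ — and not a larger constant — remains valid in dimension $n$, which in the Gaussian case calls for a dedicated computation. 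The remaining pieces — the common‑randomness reduction, the telescoping identities, and the $r\to\infty$ limit — are routine.
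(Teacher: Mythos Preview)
Your reduction to empty common randomness and the telescoping identities rewriting \eqref{e67} as $R\le\rho^2 S$ are correct and match the paper. The gap is in the next step. Your per‑round argument requires, for each $w^{j-1}$, the inequality $I(W_j;\vY\mid w^{j-1})\le\rho^2 I(W_j;\vX\mid w^{j-1})$ (odd $j$; symmetrically for even $j$), i.e.\ the ordinary SDPI with constant $\rho^2$ for an \emph{arbitrary marginal tilt} $a(\vect x)b(\vect y)\,P_{XY}^{\otimes n}$ of the $n$‑fold source. This is precisely the vector analogue of Lemma~\ref{lem_symmetric}, and the paper states explicitly in Section~\ref{sec_proof_ssdpi} (opening of the tensorization subsection) that this statement is \emph{false}: if $Q_{XY}$ in Lemma~\ref{lem_symmetric} is replaced by $Q_{XY}^{\otimes n}$, then $\sup_{P_{UXY}} I(U;Y^n)/I(U;X^n)$ can be strictly larger than $\rho^2$. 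Since any marginal tilt $a(\vect x)b(\vect y)$ of $P_{XY}^{\otimes n}$ is realizable as the conditional law after two rounds of some protocol, the per‑round bound you propose genuinely fails for some round of some protocol; no ``dedicated $n$‑dimensional computation'' can rescue it. Your appeal to \cite{liu2016common} does not help either: what is established there (and here) is $s_\infty(P_{XY}^{\otimes n})=\rho^2$, which is exactly the aggregate statement $R\le\rho^2 S$, not the stronger per‑round marginal‑tilt SDPI you would need.

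The paper circumvents this by tensorizing at the level of the full ratio $R/S$ rather than per round. Lemma~\ref{lem4} shows, via a Kaspi/Csisz\'ar‑sum style decomposition (introducing auxiliaries such as $U_i=(W_i,Y_2)$, $\bar U_i=(W_i,X_1)$), that $R(P_{\vX\vY W^r})/S(P_{\vX\vY W^r})$ for an $n$‑fold product source is bounded by the maximum over coordinates of the scalar ratio $R/S$. The scalar ratio is then handled by Lemma~\ref{lem3} (binary, via the scalar Lemma~\ref{lem_symmetric}) and Lemma~\ref{lem11} (Gaussian, via a CLT reduction to binary). So the missing idea in your outline is exactly this Kaspi‑type tensorization of the symmetric SDPI; the per‑round route you sketched is not merely harder but actually blocked.
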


The proof of Theorem~\ref{thm6} is given in Section~\ref{sec_proof_ssdpi}.

\begin{remark}
%Recall that $\Pi$ denotes the collection of all messages $W^r$.
The following notions of external and internal information costs were introduced in \cite{CSWY01} and \cite{BBCR10} respectively:
\begin{align}
{\sf IC^{ext}}_P(\Pi)&:=I(\Pi;\mathbf{X,Y});
\\
{\sf IC}_P(\Pi)&:={\sf IC^{ext}}_P(\Pi)-[I(\mathbf{X;Y})-I(\mathbf{X;Y}|\Pi)].
\end{align}
Using the Markov chain conditions of the messages
\begin{align}
&W_i-(\mathbf{X},W^{i-1})-\mathbf{Y},\quad i\in [r]\setminus2\mathbb{Z},
\\
&W_i-(\mathbf{Y},W^{i-1})-\mathbf{X},\quad i\in [r]\cap2\mathbb{Z}
\end{align}
we will be able to write the external and internal information as sums of information gains in each round of communication:
\begin{align}
I(\mathbf{X;Y})-I(\mathbf{X;Y}|\Pi)
=\!\! \sum_{i\in[r]\setminus 2\mathbb{Z}} \!\! I(W_i; \mathbf{Y}|W^{i-1}) +
\!\! \sum_{i\in [r]\cap
2\mathbb{Z}} \!\! I(W_i; \mathbf{X}|W^{i-1}),
\label{eq71}
\\
I(\Pi;\mathbf{X,Y})= \!\! \sum_{i\in[r]\setminus 2\mathbb{Z}} \!\! I(W_i; \mathbf{X}|W^{i-1}) 
+
\!\! \sum_{i\in [r]\cap2\mathbb{Z}} \!\! I(W_i; \mathbf{Y}|W^{i-1}),
\end{align}
which are useful later in some proofs.
\end{remark}

\section{Reduction of testing against arbitrary $\rho$ to testing against independence}\label{sec_cr}
The results in Theorem~\ref{thm5} and Theorem~\ref{thm6} only (directly) applies to testing against independence, and hence are insufficient for handling the local risks at an arbitrary $\rho$.
Fortunately, for binary and Gaussian vectors, there is a simple device of translating the correlations by leveraging the common randomness,
so that the general problem is reduced to the case of testing independence solved in Theorem~\ref{thm5} and Theorem~\ref{thm6}.
More precisely, we obtained the following result:
\begin{corollary}\label{crlry:anyTwoCorrs}
%Consider arbitrary $\rho_1\in[-1,1]$ and $\rho_0 \in [\tfrac{\rho_1 - 1}{2} , \tfrac{\rho_1 + 1}{2}]$.
Let $P_{\vX \vY}^{\rho_0}$ (resp.\ $P_{\vX \vY}^{\rho_1}$) be the joint distribution for Gaussian or binary symmetric vector sources under correlation $\rho_0$ (resp.\ $\rho_1$).
Let $P_{\Pi|\vX \vY}$ be an arbitrary protocol. 
Then for any $\rho_1\in[-1,1]$ and $\rho_0 \in [\tfrac{\rho_1 - 1}{2} , \tfrac{\rho_1 + 1}{2}]$, 
\begin{align}\label{eq:klbndTwocorrs}
\max\{D(P_{\Pi \vX}^{\rho_1}\|P_{\Pi \vX}^{\rho_0}),\,
D(P_{\Pi \vY}^{\rho_1}\|P_{\Pi \vY}^{\rho_0})\}
\le 
\left(\frac{\rho_1-\rho_0}{1-|\rho_0 |}\right)^2 k.
\end{align}
In particular, this bounds the Fisher information in the case of finite-length binary vectors as
\begin{align}\label{eq:fishBndIntrctve}
    \max \{\fisher(\Pi,\vX; \rho) , \fisher(\Pi,\vY; \rho) \}\le \frac{2 k \ln 2}{(1 - |\rho|)^2}.
\end{align}
\end{corollary}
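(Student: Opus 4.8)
The plan is to reduce Corollary~\ref{crlry:anyTwoCorrs} to the testing-against-independence bound obtained from Theorems~\ref{thm5} and~\ref{thm6} by a ``correlation splitting'' trick that exploits common randomness. The key algebraic fact is that, for both the Gaussian and binary symmetric sources, a pair with correlation $\rho_1$ can be realized by first generating a pair with correlation $\rho_0$ and then locally (with shared randomness) injecting additional noise, and conversely. Concretely, in the Gaussian case if $(X,Y)$ has correlation $\rho_0$ then $X':=X$, $Y':=\alpha Y + \sqrt{1-\alpha^2}N$ with $N\sim\mathcal N(0,1)$ independent has correlation $\alpha\rho_0$; more usefully, one writes both the $\rho_1$-correlated pair and the $\rho_0$-correlated pair as deterministic-plus-common-randomness functions of a single underlying $\rho_0$-correlated (or even independent) pair, so that a protocol run on correlation $\rho_0$ samples can be ``simulated'' as a protocol with common randomness run on correlation $\rho_1$ samples. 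The binary case uses the analogous fact that a $\mathrm{BSC}$-type channel composes: a $\rho_1$-correlated $\pm1$ pair is a $\rho_0$-correlated pair passed through an independent $\pm1$ multiplicative noise of appropriate bias, valid precisely when $|\rho_1/\rho_0|\le 1$, which is what the hypothesis $\rho_0\in[\tfrac{\rho_1-1}{2},\tfrac{\rho_1+1}{2}]$ (equivalently $|\rho_1-\rho_0|\le 1-|\rho_0|$, so $(\rho_1-\rho_0)/(1-|\rho_0|)\in[-1,1]$) is engineered to guarantee.

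The steps, in order. First, fix $\rho_0,\rho_1$ satisfying the hypothesis and set $\beta:=\tfrac{\rho_1-\rho_0}{1-|\rho_0|}\in[-1,1]$; observe that a $\rho_1$-correlated source is obtained from a $\rho_0$-correlated source by an operation parametrized by $\beta$ that, at $\beta=0$, is the identity and, at $|\beta|=1$, makes the two coordinates as correlated/decorrelated as possible — in particular the ``effective correlation injected'' is exactly $\beta$. Second, given a protocol $P_{\Pi|\vX\vY}$ and the source $P^{\rho_0}_{\vX\vY}$, build a new protocol $\tilde\Pi$ with common randomness $W_0$ on the source $P^{\rho_1}_{\vX\vY}$ (or, going the other direction, apply Theorem~\ref{thm5}/\ref{thm6} directly): the point is that $D(P^{\rho_1}_{\Pi\vX}\|P^{\rho_0}_{\Pi\vX})$ is exactly the testing-against-``$\beta$-product'' divergence for the simulated protocol, which Theorem~\ref{thm5} bounds by $I(\vX;\vY)-I(\vX;\vY|\tilde\Pi)$ computed under the $\beta$-correlated law, and Theorem~\ref{thm6} then bounds that by $\beta^2\,I(\tilde\Pi;\vX,\vY)\le\beta^2 H(\tilde\Pi)\le\beta^2 k$ using $H(\tilde\Pi)=H(\Pi)\le k$ (common randomness and local pre/post-processing do not change the entropy budget). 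This yields \eqref{eq:klbndTwocorrs}; the same argument with the roles of $\vX,\vY$ swapped handles the other term in the max. Third, to get \eqref{eq:fishBndIntrctve} in the finite-length binary case, apply Lemma~\ref{lem:local_div_fisher}: take $\rho_1=\rho+\varepsilon$, $\rho_0=\rho$, so $\beta=\varepsilon/(1-|\rho|)+o(\varepsilon)$, divide \eqref{eq:klbndTwocorrs} by $\varepsilon^2$, let $\varepsilon\to0$, and multiply by $2\ln2$; the regularity conditions needed to invoke Lemma~\ref{lem:local_div_fisher} are exactly those verified for finite-alphabet sources in Remark~\ref{rmrk_regularity}.

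The main obstacle is making the ``simulation via common randomness'' rigorous rather than heuristic — i.e.\ exhibiting, for each of the two source families, an explicit joint construction on a common probability space of a $\rho_0$-correlated pair, a $\rho_1$-correlated pair, and independent randomness, such that (a) the $\rho_1$-pair is a measurable function of the $\rho_0$-pair together with the shared/private randomness, (b) the induced protocol has the same transcript entropy, and (c) the divergence between transcript laws is preserved under this reparametrization. For the binary case this is a short explicit channel-composition computation (and the constraint $\beta\in[-1,1]$ is precisely where the interval hypothesis on $\rho_0$ is consumed); for the Gaussian case one must be slightly careful that the relevant maps are linear-Gaussian and that total-variation/divergence identities survive, but no approximation (no CLT) is needed here. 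Once the construction is in place, the divergence bound is an immediate consequence of Theorems~\ref{thm5} and~\ref{thm6} applied to the simulated protocol, and the Fisher-information corollary is a routine limit.
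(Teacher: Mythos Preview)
Your overall strategy---reduce testing $\rho_1$ vs.\ $\rho_0$ to testing $\beta$ vs.\ $0$ via a common-randomness construction, then invoke Theorems~\ref{thm5} and~\ref{thm6}---is exactly the paper's approach. However, your concrete constructions are pointed in the wrong direction, and this is not a cosmetic issue. Your Gaussian example $Y'=\alpha Y+\sqrt{1-\alpha^2}N$ uses \emph{private} noise and can only shrink correlation; your binary description (``independent $\pm1$ multiplicative noise \ldots\ valid when $|\rho_1/\rho_0|\le 1$'') likewise produces a constraint that has nothing to do with the hypothesis $\rho_0\in[\tfrac{\rho_1-1}{2},\tfrac{\rho_1+1}{2}]$. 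In your second paragraph you say the simulated protocol $\tilde\Pi$ lives on $P^{\rho_1}_{\vX\vY}$, but then apply Theorem~\ref{thm6} with coefficient $\beta^2$---that coefficient only arises if the underlying source has correlation $\beta$, not $\rho_1$.

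The correct picture (and the paper's) is the reverse: the \emph{base} source $(\vX,\vY)$ has correlation $0$ under $P^{(0)}$ and $\beta=\tfrac{\rho_1-\rho_0}{1-|\rho_0|}$ under $P^{(1)}$; one then uses \emph{shared} randomness $W_0$ to \emph{increase} the correlation and build $(\vX',\vY')$ with correlation $\rho_0$ (resp.\ $\rho_1$). In the Gaussian case $X'_i=\alpha Z_i+\sqrt{1-\alpha^2}X_i$, $Y'_i=s\alpha Z_i+\sqrt{1-\alpha^2}Y_i$ with $s\alpha^2=\rho_0$; in the binary case $X'_i=B_iZ_i+(1-B_i)X_i$, $Y'_i=sB_iZ_i+(1-B_i)Y_i$ with $B_i\sim\Ber(|\rho_0|)$. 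The original protocol $\Pi$ acts on $(\vX',\vY')$, and since $\vX',\vY'$ are functions of $(\vX,W_0)$ and $(\vY,W_0)$ respectively, $\Pi$ is a legitimate protocol with common randomness $W_0$ on the base source. Now Theorems~\ref{thm5} and~\ref{thm6} give $D(P^{(1)}_{W_0\Pi\vY}\|P^{(0)}_{W_0\Pi\vY})\le\beta^2 I(\Pi;\vX,\vY|W_0)\le\beta^2 k$ (note: the bound is $I(\Pi;\vX,\vY|W_0)\le H(\Pi)\le k$, not $H(\tilde\Pi)=H(\Pi)$ as you wrote---$\tilde\Pi$ includes $W_0$ and can have unbounded entropy). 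Finally, since $\vY'$ is a function of $(W_0,\vY)$, data processing for KL gives $D(P^{\rho_1}_{\Pi\vY'}\|P^{\rho_0}_{\Pi\vY'})\le\beta^2 k$. Your third step (Fisher via Lemma~\ref{lem:local_div_fisher} and Remark~\ref{rmrk_regularity}) is fine.
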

% Remark: RW did not cover the interactive case; besides this, this theorem is the same as RW \uri{modify}.
\begin{proof}
From Theorems~\ref{thm5}~and~\ref{thm6} we have 
\begin{align}
\max\{D(P_{\Pi \vX}^{\rho}\|P_{\Pi \vX}^{0}),\,
D(P_{\Pi \vY}^{\rho}\|P_{\Pi \vY}^{0})\}
\le 
\rho^2 I(\Pi;\vX,\vY) \le \rho^2 k.
\end{align}
% for any $\rho$, for both the Gaussian and binary cases. We now introduce common randomness to each coordinate in order to shift the correlation values as follows.
The proof uses a device of shifting the correlation by introducing common randomness.
Suppose that $\vX$ and $\vY$ are iid binary or Gaussian vectors of length $n$, where the correlation between $X_i$ and $Y_i$ is $0$ under $P^{(0)}$ and $\rho:=\frac{\rho_1-\rho_0}{1-|\rho_0|}$ under $P^{(1)}$, for each $i\in [n]$.
We define the common randomness $W_0$ independent of $\vX$ and $\vY$ as follows:
\begin{itemize}
    \item Gaussian case:  Let $W_0=\vZ$, where $Z_i \sim \mathcal{N}(0,1)$ are iid, and define
    \begin{align}
        X'_i &= \alpha Z_i + \sqrt{1 - \alpha^2} X_i \\
        Y'_i &= s \alpha Z_i + \sqrt{1 - \alpha^2} Y_i
    \end{align}
    for some $\alpha \in [-1,1]$ and $s \in \{-1, 1\}$.
    \item Binary case: Let $W_0  = ({\bf B,Z})$ where ${\bf B}$ is independent of ${\bf Z}$, $B_i \sim \Ber{(\alpha)}$ over $\{0,1\}$ are iid and $Z_i \sim \Ber{(\tfrac{1}{2})}$ over $\{-1,1\}$ are iid. 
    Put
    \begin{align}
        X'_i &= B_iZ_i + (1-B_i)X_i\\
        Y'_i &= sB_iZ_i + (1-B_i)Y_i
    \end{align}
    for some $\alpha \in [0,1]$ and $s \in \{-1, 1\}$.
\end{itemize}
In both cases,
it can be verified that by appropriately choosing $s$ and $\alpha$, the correlation between $X'_i$ and $Y'_i$ equals $\rho_0$ under $P^{(0)}$ and $\rho_1$ under $P^{(1)}$.
Now, consider any protocol $\Pi=(W_0,W^r)$ for the source ${\bf X,Y}$ which includes the common randomness $W_0$.
We have
% \begin{align}
%     D(P_{\Pi,\vY'}^{\rho_1}\|P_{\Pi,\vY'}^{\rho_0}) &= \E_{W_0} D(P_{\Pi,\vY'|W_0}^{\rho_1}\|P_{\Pi,\vY'|W_0}^{\rho_0}) \\ \label{eq:transition63}
%     &= \E_{W_0} D(P_{\Pi,\vY|W_0}^{\rho}\|P_{\Pi,\vY|W_0}^{0}) \\
%     &= D(P_{\Pi,\vY}^{\rho}\|P_{\Pi,\vY}^{0}) \\
%     &\le \rho^2 k.
% \end{align}
\begin{align}
D(P^{(1)}_{W_0W^r\vY'}\|P^{(0)}_{W_0W^r\vY'})    
&\le
D(P^{(1)}_{W_0W^r\vY}\|P^{(0)}_{W_0W^r\vY}) 
\label{eq:transition63}\\
&\le \rho^2 I(W_0,W^r;\vX,\vY)
\label{e_thm56}
\\
&\le \rho^2 I(W^r;\vX,\vY|W_0)
\\
&\le \rho^2k
\end{align}
where \eqref{eq:transition63} follows since $P^{(1)}_{\vY'|W_0W^r\vY}=P^{(0)}_{\vY'|W_0W^r\vY}=P_{\vY'|W_0\vY}$ (note $\vY'$ is a (deterministic) function of $(W_0,\vY)$), and
%\uri{\eqref{eq:transition63} is not accurate in the binary case. To be resolved}\liu{should be fixed now. Please double check.}
\eqref{e_thm56} follows from Theorem~\ref{thm5} and Theorem~\ref{thm6}.
Observe that $D(P^{(1)}_{W_0W^r\vY'}\|P^{(0)}_{W_0W^r\vY'})$ is exactly $D(P_{\Pi \vY}^{\rho_1}\|P_{\Pi \vY}^{\rho_0})$ which we wanted to upper bound.  
Repeating the same steps for $D(P_{\Pi \vX}^{\rho_1}\|P_{\Pi \vX}^{\rho_0})$ establishes \eqref{eq:klbndTwocorrs} for both the Gaussian and binary cases.

The bound on the Fisher information in the binary case \eqref{eq:fishBndIntrctve} follows from Lemma~\ref{lem:local_div_fisher}.
% the arguments mentioned right after \eqref{eq:oneWaymutInfBound1}.
\end{proof}

\begin{remark}\label{rmrk_gaussianishard}
While we expect that the same bound in \eqref{eq:fishBndIntrctve} continues to hold in the Gaussian case, 
the regularization condition required in
the transition from the KL divergence bound to the Fisher information bound appears difficult to justify in the Gaussian case (see Lemma~\ref{lem:local_div_fisher} and the ensuing remark).
\end{remark}

\section{Proof of the symmetric strong data processing inequality}\label{sec_proof_ssdpi}
This section proves  Theorem~\ref{thm6}, which states that the symmetric strong data processing inequality constant is bounded by $\rho^2$ in the case of binary symmetric or Gaussian vectors. 
We first outline the proof, and then supplement the key lemmas used.
\begin{proof}[Proof of Theorem~\ref{thm6}]
First, note that we only need to prove the case where the common randomness $W_0$ is empty.
Indeed, since $\Pi$ includes $W_0$ and since $W_0$ is independent of $(\vX,
\vY)$, we have
$I(\mathbf{X;Y}|\Pi)=I(\mathbf{X;Y}|W_0,\Pi)$
and 
$I(\Pi;\mathbf{X,Y})=I(\Pi;\mathbf{X,Y}|W_0)$, hence \eqref{e67} will follow if we establish
\begin{align}
I(\mathbf{X;Y})-I(\mathbf{X;Y}|\Pi,W_0=w_0)   
\le
\rho^2 I(\Pi; \mathbf{X,Y}|W_0=w_0). 
\end{align}
for each $w_0$.
Using the Markov chains satisfied by the messages we have
\begin{align}
I(\mathbf{X;Y})-I(\mathbf{X;Y}|W^r)
&= \!\!\! \sum_{i\in [r]\cap2\mathbb{Z}} \!\!\! I(W_i;\mathbf{X}|W^{i-1})
+ \!\!\! \sum_{i\in [r]\setminus
2\mathbb{Z}} \!\!\! I(W_i;\mathbf{Y}|W^{i-1})
\label{e79}
\\
I(W^r;\mathbf{X,Y})
&= \!\!\!  \sum_{i\in [r]\cap2\mathbb{Z}} \!\!\! I(W_i;\mathbf{Y}|W^{i-1})
+ \!\!\! \sum_{i\in [r]\setminus
2\mathbb{Z}} \!\!\! I(W_i;\mathbf{X}|W^{i-1}).
\label{e19}
\end{align}
Then the result for the binary and Gaussian cases follow respectively from Lemma~\ref{lem3} and Lemma~\ref{lem11},
as well as the tensorization property Lemma~\ref{lem4}, stated and proved
below.
\end{proof}

\subsection{Binary case}\label{sec_binary}

Our goal is to prove Lemma~\ref{lem3}, which follows from Lemma~\ref{lem_symmetric} stated earlier and proved in this section.

\begin{lemma}[{\cite{liu2016common}}]\label{lem3} Let $X,Y\in\{1,-1\}$ be equiprobably distributed with correlation $\rho$.
Consider any random variables $U^r$, $r\in \mathbb{Z}$,
satisfying 
\begin{align}
&U_i-(X,U^{i-1})-Y,\quad i\in [r]\setminus2\mathbb{Z},
% \label{e21} not referenced, also theres another label with the same name
\\
&U_i-(Y,U^{i-1})-X,\quad i\in [r]\cap2\mathbb{Z}.
\end{align}
Define 
\begin{align}
R(P_{U^rXY}) &:= \sum_{i\in [r]\cap2\mathbb{Z}}I(U_i;X|U^{i-1})
+\sum_{i\in [r]\setminus
2\mathbb{Z}}I(U_i;Y|U^{i-1});
\label{eq:R}
\\
S(P_{U^rXY}) &:= \sum_{i\in [r]\cap2\mathbb{Z}}I(U_i;Y|U^{i-1})
+\sum_{i\in [r]\setminus
2\mathbb{Z}}I(U_i;X|U^{i-1}).
\label{eq:S}
\end{align}
Then
$R(P_{U^rXY}) \le \rho^2 S(P_{U^rXY})$.    
\end{lemma}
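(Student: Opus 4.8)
The plan is to induct on the number of rounds $r$, using Lemma~\ref{lem_symmetric} as the single-round workhorse. The key structural observation, already highlighted in the preliminaries, is that the Markov chain conditions \eqref{e_markov1}--\eqref{e_markov2} force the conditional law $P_{XY|U^{i}=u^{i}}$ to be of the "doubly tilted" form $f(x)\,Q_{XY}(x,y)\,g(y)$, where $Q_{XY}$ is the original binary symmetric distribution with correlation $\rho$. First I would verify this claim by a short induction: conditioning further on $U_i$, which is a stochastic function of either $(X,U^{i-1})$ or $(Y,U^{i-1})$, multiplies the density by a function of $x$ alone (odd rounds) or of $y$ alone (even rounds), preserving the form. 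Consequently, for every prefix $u^{i-1}$, the conditional source $P_{XY|U^{i-1}=u^{i-1}}$ satisfies the hypothesis of Lemma~\ref{lem_symmetric}, so it obeys $I(U_i;Y|U^{i-1}=u^{i-1})\le \rho^2 I(U_i;X|U^{i-1}=u^{i-1})$ on odd rounds (where $U_i-(X,U^{i-1})-Y$), and the mirror inequality $I(U_i;X|U^{i-1}=u^{i-1})\le \rho^2 I(U_i;Y|U^{i-1}=u^{i-1})$ on even rounds. Averaging over $u^{i-1}\sim P_{U^{i-1}}$ gives the per-round bound with the conditional mutual informations.

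Next I would simply sum these per-round inequalities over $i\in[r]$. On the left-hand side of the desired inequality we have collected, from even rounds, the terms $I(U_i;X|U^{i-1})$ and, from odd rounds, the terms $I(U_i;Y|U^{i-1})$ — i.e. exactly $R(P_{U^rXY})$ as in \eqref{eq:R}. The per-round bounds say each of these is at most $\rho^2$ times the corresponding term $I(U_i;Y|U^{i-1})$ (even rounds) or $I(U_i;X|U^{i-1})$ (odd rounds), and those are precisely the summands of $S(P_{U^rXY})$ in \eqref{eq:S}. Hence $R\le\rho^2 S$ term by term, and summing finishes the proof. The case $r=\infty$ follows by monotone convergence, since all quantities are nonnegative and the partial sums are monotone.

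The main obstacle is not the summation — that is bookkeeping — but establishing cleanly that the tilted form is preserved and that Lemma~\ref{lem_symmetric} applies to each conditional source. One has to be a little careful that the tilting functions $f,g$ can be $u^{i-1}$-dependent and that Lemma~\ref{lem_symmetric}'s conclusion is invariant under the (deterministic, pointwise in $u^{i-1}$) reduction; but since Lemma~\ref{lem_symmetric} is stated for an arbitrary $P_{XY}$ of the tilted form and for arbitrary $U-X-Y$, this goes through directly once the inductive form claim is in place. A secondary point is the direction of the Markov chain alternating with parity: on odd rounds Alice speaks so $U_i-(X,U^{i-1})-Y$ and we use the first inequality of Lemma~\ref{lem_symmetric}; on even rounds Bob speaks and we use the second (equivalently, apply the first with the roles of $X$ and $Y$ swapped, which is legitimate because the tilted form is symmetric in structure). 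Matching these up correctly with the definitions \eqref{eq:R}--\eqref{eq:S} is the only place where a sign/parity slip could occur.
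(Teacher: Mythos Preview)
Your proposal is correct and is essentially the same as the paper's proof: both establish by induction on $i$ that $P_{XY|U^{i-1}=u^{i-1}}$ has the doubly-tilted form $f(x)\,Q_{XY}(x,y)\,g(y)$, then invoke Lemma~\ref{lem_symmetric} at each round to get the per-round inequality and sum term by term. Your write-up is in fact slightly more explicit than the paper's about the parity bookkeeping and the averaging over $u^{i-1}$.
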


\begin{proof}
It suffices to show that the ratio of the $i$-th term on the right side of \eqref{eq:R} to the $i$-th term on the right side of \eqref{eq:S} is upper-bounded by $\rho^2$ for any $i$.
Consider without loss of generality any $i\in2\mathbb{Z}$.
Note that by inducting on $i$ and using the Markov chain conditions satisfied by $U^r$, we observe that $P_{YX|U^{i-1}=u^{i-1}}$ has the property that
\begin{align}
\frac{{\rm d}P_{YX|U^{i-1}=u^{i-1}}}{{\rm  d}P_{XY}}
=f(x)g(y),\quad
\forall x,y
\end{align}
for some functions $f$ and $g$.
Then using Lemma~\ref{lem_symmetric} we conclude that for each $u^i$ we have $\frac{I(U_i;X|U^{i-1}=u^{i-1})}{I(U_i;Y|U^{i-1}=u^{i-1})}\le \rho^2$. 
\end{proof}

The following result is used in the proof of Lemma~\ref{lem_symmetric}. 
We state it in the general vector case, though we only need the scalar $(X,Y)$ case.
\begin{lemma}\label{lem:sdpi} Let $X,Y$ be binary $\PP[X=1]=1-\PP[X=0] = p$ and let $\PP[Y\neq X|X] = {1-\rho\over2}$, $\rho \in [-1,1]$.
Consider $(\vect{X},\vect{Y})$ to be $n$ iid copies of $(X,Y)$. Then for any random variables $U,V$ such that $U-\vect{X}-\vect{Y}-V$ we have
\begin{align}
     I(U;\vect{Y}) &\le \rho^2 I(U;\vect{X})\label{eq:sdpi_pre}\\
     I(\vect{X};V) &\le \rho^2 I(\vect{Y};V)\label{eq:sdpi_post}.
\end{align}
\end{lemma}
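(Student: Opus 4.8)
The plan is to establish the two inequalities \eqref{eq:sdpi_pre} and \eqref{eq:sdpi_post} by first reducing to the scalar case via tensorization and then invoking the classical scalar strong data-processing inequality for the binary symmetric channel. The key fact is that the SDPI constant $s^*(P_{XY})$ of a joint distribution is, in general, \emph{not} the same in the two directions, but for a doubly symmetric binary source (or more precisely for the channel from $X$ to $Y$) the relevant contraction coefficient is exactly $\rho^2$. So the first step is to recall (or cite from \cite{polyanskiy2017strong}) that for a single pair $(X,Y)$ with $X\sim\mathrm{Ber}(p)$ and crossover probability $\tfrac{1-\rho}{2}$, one has $I(U;Y)\le\rho^2 I(U;X)$ for every $U-X-Y$; the constant $\rho^2$ is the squared maximal correlation, and for this particular (non-symmetric in $p$, but symmetric in noise) channel the contraction coefficient of mutual information coincides with it. The second inequality is obtained by applying the first one to the ``reversed'' chain: since $V-Y-X$ (by the Markov assumption $U-\vect X-\vect Y-V$, in particular for the sub-chain $\vect X-\vect Y-V$), and since the backward channel $P_{X|Y}$ is again a binary channel whose maximal correlation equals $|\rho|$ (maximal correlation is symmetric in the two arguments), we get $I(\vect X;V)\le\rho^2 I(\vect Y;V)$ directly in the vector case.

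For the tensorization step needed for \eqref{eq:sdpi_pre}, I would use the standard fact (again available in \cite{polyanskiy2017strong}) that the contraction coefficient of mutual information tensorizes: if $P_{XY}$ has $\eta_{KL}(P_{XY}) \le s$ then the $n$-fold product $P_{XY}^{\otimes n}$ also has $\eta_{KL} \le s$. Concretely, writing $\vect X = X_1^n$, $\vect Y = Y_1^n$, and telescoping over coordinates with a hybrid argument, one bounds $I(U;\vect Y)$ coordinate-by-coordinate: $I(U;Y_i\mid Y^{i-1}) \le \rho^2 I(U, Y^{i-1};X_i)$ using the scalar SDPI for the $i$-th pair together with the Markov structure, and then summing and using the chain rule plus the independence across coordinates collapses the right side to $\rho^2 I(U;\vect X)$. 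Alternatively, one can invoke the "SDPI for mutual information tensorizes" lemma as a black box, which is cleaner.

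The main obstacle, I expect, is making the scalar-to-vector passage fully rigorous while respecting the exact Markov chain $U - \vect X - \vect Y - V$ — in particular, for \eqref{eq:sdpi_pre} the subtlety is that after conditioning on $Y^{i-1}$ the pair $(X_i, Y_i)$ is still a copy of $(X,Y)$ and $U - (X_i, Y^{i-1}) - Y_i$ holds, so the scalar bound applies with the ``input'' being $(U, Y^{i-1})$; one must check this Markov relation carefully. There is also a minor point about whether one wants the constant $\rho^2$ or the (generally larger) KL contraction coefficient — here they coincide because the binary symmetric channel is an extremal case where $\eta_{KL} = \rho_m^2 = \rho^2$, and this is precisely the content one should cite rather than reprove. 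Everything else (non-negativity of divergence, chain rule, data processing) is routine. I would therefore organize the proof as: (i) state the scalar SDPI with constant $\rho^2$ in both directions, citing \cite{polyanskiy2017strong}; (ii) tensorize for the forward direction via a hybrid/telescoping argument; (iii) obtain the backward direction \eqref{eq:sdpi_post} by the symmetry of maximal correlation applied to $P_{X|Y}$, noting this needs no separate tensorization since the scalar result already yields the vector statement once one observes $\rho_m(P_{\vect X\vect Y}) = \rho_m(P_{XY}) = |\rho|$.
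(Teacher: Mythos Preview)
Your treatment of \eqref{eq:sdpi_pre} and of the tensorization step is fine and matches the paper: reduce to $n=1$ via the Ahlswede--G\'acs tensorization of $\eta_{KL}$, then cite the classical fact that the binary symmetric channel with crossover $\tfrac{1-\rho}{2}$ has KL contraction coefficient $\rho^2$ for any input distribution.

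The gap is in your argument for \eqref{eq:sdpi_post}. Two of the claims you rely on are false for general $p\neq\tfrac12$. First, the maximal correlation of this joint is \emph{not} $|\rho|$: a direct computation gives
\[
\rho_m(X;Y)\;=\;|\rho|\sqrt{\frac{p(1-p)}{q(1-q)}}\;<\;|\rho|,\qquad q:=p\rho+\tfrac{1-\rho}{2},
\]
so the symmetry of $\rho_m$ does not hand you the constant $\rho^2$. Second, and more seriously, the reverse channel $P_{X|Y}$ is \emph{not} a BSC when $p\neq\tfrac12$, so you cannot simply ``apply the first inequality to the reversed chain''. And the general relation between $\eta_{KL}$ and $\rho_m^2$ goes the wrong way for your purposes: one always has $\eta_{KL}\ge\rho_m^2$, so even knowing $\rho_m\le|\rho|$ does not yield $\eta_{KL}\le\rho^2$.

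The paper closes this gap by invoking a sharp binary-input SDPI bound (Theorem~21 in \cite{polyanskiy2017strong}): for binary $A$ with $B\sim P$ under $A=0$ and $B\sim Q$ under $A=1$, and any $U-A-B$,
\[
I(U;B)\;\le\;\Bigl(1-\bigl(\textstyle\sum_v\sqrt{P(v)Q(v)}\bigr)^2\Bigr)\,I(U;A).
\]
Applying this with $A=Y$, $B=X$ and computing the Bhattacharyya coefficient of the (asymmetric) reverse channel gives $\sum_v\sqrt{P(v)Q(v)}=\tfrac{\sqrt{1-\rho^2}}{2\sqrt{q(1-q)}}\ge\sqrt{1-\rho^2}$, hence $\eta_{KL}(Y\to X)\le\rho^2$. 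This short explicit computation is what your maximal-correlation symmetry argument is missing.
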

\begin{proof}
We first recall that a result known as tensorization (due to~\cite{AG76} in this context) allows to only check $n=1$ case. For $n=1$, the first part~\eqref{eq:sdpi_pre} is the standard inequality dating back to~\cite{AG76}, see~\cite{polyanskiy2017strong} for a survey. To show inequality~\eqref{eq:sdpi_post}, we apply Theorem 21 in~\cite{polyanskiy2017strong}, which establishes the following. Let $A$ be a binary input and $B\sim P$ when $A=0$ and $B\sim Q$  when $A=1$, where $P=(P(v),v=0,1,\ldots)$ and $Q=(Q(v), v=0,1,\ldots)$ are two arbitrary distributions. Then for any $U-A-B$ we have
\begin{equation}
I(U;B) \le I(U;A) \left(1- \left( \sum_v \sqrt{P(v) Q(v)} \right)^2\right)\,. \label{eq:eta_bin}
\end{equation}
(The bound is tight, cf.~\cite[Remark 8]{polyanskiy2017strong}, whenever $B$ is binary.) Applying this result to $A=Y$ and $B=X$ and denoting $q=p\rho + {1-\rho\over2}$ we get
$$ \sum_v \sqrt{P(v) Q(v)} = {\sqrt{1-\rho^2}\over 2 \sqrt{q(1-q)}} \ge \sqrt{1-\rho^2}\,.$$
\end{proof}
\begin{proof}[Proof of Lemma~\ref{lem_symmetric}] Due to symmetry, it suffices to prove only the first inequality. Computing $P_{Y|X}$ and applying~\eqref{eq:eta_bin} we need to prove
$$
\sum_{y\in\{0,1\}} \sqrt{Q_{Y|X}(y|0)Q_{Y|X}(y|1)} { g(y) \over \sqrt{g_0 g_1}} \ge \sqrt{1-\rho^2}\,, 
$$ 
where $g_x = \sum_{y'} g(y') Q_{Y|X}(y'|x), x\in \{0,1\}$. Note that for all $y$,
\begin{align}
\sqrt{Q_{Y|X}(y|0)Q_{Y|X}(y|1)} = \sqrt{1-\rho^2\over 4}. 
\end{align}
By rescaling $g$ so that $\sum_y g(y) = 1$ we get that $g_0+g_1=1$ and hence $\sqrt{g_0 g_1} \le {1\over 2}$, as required.
\end{proof}

\subsection{Tensorization}
The bound in Lemma~\ref{lem_symmetric} does not tensorize. That is, if $Q_{XY}$ in the lemma is replaced by $Q_{XY}^{\otimes n}$,
then $\sup_{P_{UXY}}\frac{I(U;Y^n)}{I(U;X^n)}$ can be strictly larger than $\rho^2$.
Thus the cases of binary symmetric and Gaussian vectors cannot be proved via Lemma~\ref{lem_symmetric} as in the case of a pair of binary variables.
This is a subtle issue that makes the proof of Theorem~\ref{thm6} somewhat nontrivial.
% , since we cannot simply bound the ratio of the corresponding summands in \eqref{e79} and \eqref{e19}.
% Instead, we need to first perform some sophisticated tensorizations for \eqref{e79} and \eqref{e19} where the summands interact.
% Next, we argue that the ratio $\frac{R}{S}$ in Lemma~\ref{lem3} tensorizes.
Luckily, the symmetric strong data procesing constant tensorizes:
\begin{lemma}\label{lem4}
%Let $P_{XY}$ be the distribution 
%of either the symmetric binary random variables or Gaussian random variables with correlation $\rho$.
%under which $X,Y\in\{1,-1\}$ are both equiprobably distributed and $\rho:=\mathbb{E}[XY]$.
Let $(\mathbf{X,Y}):=(X_j,Y_j)_{j=1}^n\sim \otimes_{j=1}^n P_{X_jY_j}$ for any given $P_{X_jY_j}$, $j=1,\dots n$.
Consider any random variables  $W^r$,
%$|\mathcal{W}^r|<\infty$,
$r\in \mathbb{Z}$,
satisfying 
\begin{align}
&W_i-(\mathbf{X},W^{i-1})-\mathbf{Y},\quad i\in [r]\setminus2\mathbb{Z},
\label{e21}
\\
&W_i-(\mathbf{Y},W^{i-1})-\mathbf{X},\quad i\in [r]\cap2\mathbb{Z}.
\end{align}
% Let
% \begin{align}
% R(P_{\mathbf{XY}W^r}) &:= \sum_{i\in[r]\cap2\mathbb{Z}}I(\mathbf{X}; W_i|W^{i-1}) +
% \sum_{i\in[r]\setminus2\mathbb{Z}}I(\mathbf{Y}; W_i|W^{i-1});
% \label{eq:R}
% \\
% S(P_{\mathbf{XY}W^r}) &:= \sum_{i\in[r]\cap2\mathbb{Z}}I(\mathbf{Y}; W_i|W^{i-1}) + \sum_{i\in[r]\setminus2\mathbb{Z}}I(\mathbf{X}; W_i|W^{i-1}).
% \label{eq:S}
% \end{align}
Then
\begin{align}
\frac{R(P_{\mathbf{XY}W^r})}{S(P_{\mathbf{XY}W^r})}\le\max_{1\le i\le n}\sup_{P_{U^r|X_jY_j}}\frac{R(P_{X_jY_jU^r})}{S(P_{X_jY_jU^r})} 
\label{e91}
\end{align}
where $P_{U^r|X_jY_j}$ is such that
\begin{align}
&U_i-(X_j,U^{i-1})-Y_j,\quad i\in [r]\setminus2\mathbb{Z},
\\
&U_i-(Y_j,U^{i-1})-X_j,\quad i\in [r]\cap2\mathbb{Z}.
\end{align}    
\end{lemma}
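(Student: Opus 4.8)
The plan is to prove the inequality by induction on the number of coordinates $n$, peeling off one coordinate at a time. Write $\mathbf{X}=(X_1,X'_2)$ where $X'_2=(X_2,\dots,X_n)$, and similarly for $\mathbf{Y}$. The key observation is that both $R(P_{\mathbf{XY}W^r})$ and $S(P_{\mathbf{XY}W^r})$ can be split along this partition. For each round $i$, the term $I(W_i;\mathbf{X}|W^{i-1})$ (say $i$ even) decomposes via the chain rule as $I(W_i;X_1|W^{i-1})+I(W_i;X'_2|W^{i-1},X_1)$, and analogously for the $\mathbf{Y}$-terms. Thus $R(P_{\mathbf{XY}W^r}) = R_1 + R_{2:n}$ and $S(P_{\mathbf{XY}W^r}) = S_1 + S_{2:n}$, where the index-$1$ parts collect the $X_1$- and $Y_1$-information and the remaining parts collect the $X'_2$- and $Y'_2$-information (conditioned additionally on $X_1$ or $Y_1$ as dictated by the chain rule). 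Since for nonnegative reals $\frac{a_1+a_2}{b_1+b_2}\le\max\{\frac{a_1}{b_1},\frac{a_2}{b_2}\}$, it suffices to show each of $R_1/S_1$ and $R_{2:n}/S_{2:n}$ is bounded by the right-hand side of \eqref{e91}.

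The first step is therefore to verify that $(W^r, X_1, Y_1)$ — more precisely the variables $W_i$ together with the extra conditioning variable — still satisfy the alternating Markov structure \eqref{e21}, so that the index-$1$ part $R_1/S_1$ is exactly of the scalar form $R(P_{X_1Y_1 U^r})/S(P_{X_1Y_1 U^r})$ for an admissible $P_{U^r|X_1Y_1}$; hence it is $\le \sup_{P_{U^r|X_1Y_1}} R/S$. For the index-$\{2,\dots,n\}$ part, the point is that conditioning on $X_1$ (or $Y_1$) does not break the product structure on the remaining coordinates: given $X_1$, the pair $(X'_2,Y'_2)$ still has product law $\otimes_{j=2}^n P_{X_jY_j}$ because $X_1$ is independent of $(X'_2,Y'_2)$ when $(\mathbf X,\mathbf Y)$ is a product measure, and likewise given $Y_1$. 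One also needs to check that the conditioned messages, viewed as a protocol on $(X'_2,Y'_2)$ with the conditioning variable folded into $W^{i-1}$, still satisfy the alternating Markov chains. Granting this, $R_{2:n}/S_{2:n}$ is of the form covered by the $(n-1)$-coordinate instance of the lemma, and the induction hypothesis gives $R_{2:n}/S_{2:n}\le \max_{2\le i\le n}\sup_{P_{U^r|X_jY_j}} R/S$, which is dominated by the maximum over all $i\in[n]$.

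The main obstacle I anticipate is the bookkeeping around the Markov conditions after the chain-rule split: when we write $I(W_i;X'_2|W^{i-1},X_1)$ with $i$ even, we must argue that treating $(W^{i-1},X_1)$ as the "past" still yields a valid protocol on the source $(X'_2,Y'_2)$, i.e.\ that $W_i-(X'_2,W^{i-1},X_1)-Y'_2$ and the complementary chain for odd $i$, and that this conditional law on $(X'_2,Y'_2)$ is exactly a product of the marginals so the induction applies verbatim. This requires invoking the original alternating Markov chains \eqref{e21} together with the product structure of $P_{\mathbf{XY}}$; the odd/even parity interacts with which variable ($X_1$ or $Y_1$) gets pulled out, so a clean uniform statement (e.g.\ treat the extracted scalar as part of the common past in both cases) is needed to avoid a case explosion. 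Once that structural fact is in place, the rest is the elementary mediant inequality and the induction, with the base case $n=1$ being a tautology.
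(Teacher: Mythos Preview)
Your induction strategy and the mediant inequality are the right framework, and they match the paper's approach. However, there is a genuine gap in the step you flag as ``bookkeeping'': with the naive chain-rule split, the coordinate-$1$ sub-problem does \emph{not} inherit the alternating Markov chains. Concretely, taking $U_i=W_i$ you would need $W_i-(X_1,W^{i-1})-Y_1$ for odd $i$, but this fails already for $i=3$: conditioned on $(X_1,W^2)$, the message $W_3$ still depends on $X'_2$, and $X'_2$ is in general \emph{correlated} with $Y_1$ once you condition on the transcript $W^2$ (which depends on $\mathbf{Y}$). The rectangle property only gives $P_{\mathbf{XY}|W^{i-1}}=a(\mathbf{x})b(\mathbf{y})P_{\mathbf{XY}}$; after further conditioning on $X_1$ and marginalizing $Y'_2$, the residual density of $(X'_2,Y_1)$ need not factor. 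The same issue afflicts your $R_{2:n},S_{2:n}$: the extra conditioning variable alternates between $X_1$ and $Y_1$ with parity, so there is no single admissible $\tilde U^r$ that recovers those sums.

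The paper's fix (for $n=2$) is to fold a \emph{cross} variable into the auxiliary: set $U_i=(W_i,Y_2)$ for the $(X_1,Y_1)$-problem and $\bar U_i=(W_i,X_1)$ for the $(X_2,Y_2)$-problem. With $Y_2$ in the past, the check $W_i-(X_1,Y_2,W^{i-1})-Y_1$ reduces (via the rectangle property and the product prior) to a density that genuinely factors; the even-$i$ chain becomes trivial since $W_i$ is then measurable w.r.t.\ the past. The price is that the two sub-problems overlap, producing an identical correction term $I(X_1;Y_2\,|\,W^r)$ subtracted from \emph{both} $R$ and $S$. Because the single-letter ratio is always $\le 1$, this common nonnegative correction only helps the mediant bound. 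So your plan becomes correct once you replace ``pull out $X_1$ or $Y_1$ depending on parity'' by ``always augment with the opposite coordinate's cross variable,'' and track the resulting $-I(X_1;Y_2|W^r)$ in both numerator and denominator.
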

\begin{proof}
% \yp{JINGBO: This will not do. This lemma is the culmination of all the paper, it has to have a complete and impeccably clean proof. What I mean is $r=2, n=2$ is not good. The $n=2$ is easy to fix: just say that Lemma 9 establishes base case and here we show the induction step (so only deal with $n=2$). $r=2$ should be handled explicitly.}
Note that by induction it suffices to consider $n=2$.
Define 
\begin{align}
U_i&:=(W_i,Y_2),\quad i=1,2,\dots, r;
\\
\bar{U}_i&:=(W_i,X_1), \quad i=1,2,\dots,r.
\end{align}
Then note that the Markov chains
\begin{align}
&U_i-(U^{i-1},X_1)-Y_1,
\quad i\in [r]\setminus2\mathbb{Z},
\\
&\bar{U}_i-(\bar{U}^{i-1},X_2)-Y_2,
\quad i\in
[r]\setminus2\mathbb{Z},
\\
&U_i-(U^{i-1},Y_1)-X_1,
\quad i\in
[r]\cap2\mathbb{Z},
\\
&\bar{U}_i-(\bar{U}^{i-1},Y_2)-X_2,
\quad i\in [r]\cap2\mathbb{Z},
\end{align}
are satisfied.
Moreover,
\begin{align}
&\quad R(P_{W^rX^2Y^2})
\nonumber\\
&= \sum_{i\in[r]\setminus 2\mathbb{Z}}
[I(W_i;Y_2|W^{i-1})+I(W_i;Y_1|W^{i-1},Y_2)]
\nonumber\\
&\quad+\sum_{i\in[r]\cap2\mathbb{Z}}[I(W_i;X_2|W^{i-1},X_1)
+I(W_i;X_1|W^{i-1})]
\\
&= \sum_{i\in[r]\setminus 2\mathbb{Z}}
[I(W_i,X_1;Y_2|W^{i-1},X_1)-\Delta_i
+I(W_i,Y_2;Y_1|W^{i-1},Y_2)]
\nonumber\\
&\quad+ \!\! \sum_{i\in[r]\cap2\mathbb{Z}} \!\! [I(W_i,X_1;X_2|W^{i-1},X_1)+I(W_i,Y_2;X_1|W^{i-1},Y_2)-\Delta_i]
\\
 &=R(P_{U^rX_1Y_1})+R(P_{\bar{U}^rX_2Y_2})  -\sum_{i=1}^r\Delta_i
  \label{e27}
 \\
&= R(P_{U^rX_1Y_1})+R(P_{\bar{U}^rX_2Y_2})  -I(X_1;Y_2|W^r)
 \label{e28}
\end{align}
where we have defined $\Delta_i:=I(X_1;Y_2|W^i)-I(X_1;Y_2|W^i)$, and in
\eqref{e27} we have used the independence $Y_1\perp Y_2$ for the $i=1$ base case.
Next, with similar algebra we obtain
\begin{align}
&\quad S(P_{W^rX^2Y^2})
\nonumber\\
&= \sum_{i\in[r]\setminus 2\mathbb{Z}}
[I(W_i;X_1|W^{i-1})+I(W_i;X_2|W^{i-1},X_1)]
\nonumber\\
&\quad+\sum_{i\in[r]\cap2\mathbb{Z}}[I(W_i;Y_1|W^{i-1},Y_2)
+I(W_i;Y_2|W^{i-1})]
\\
&= \sum_{i\in[r]\setminus 2\mathbb{Z}}
[I(W_i,Y_2;X_1|W^{i-1},Y_2)-\Delta_i
+I(W_i,X_1;X_2|W^{i-1},X_1)]
\nonumber\\
&\quad+\sum_{i\in[r]\cap2\mathbb{Z}}[I(W_i,Y_2;Y_1|W^{i-1},Y_2)
+I(W_i,X_1;Y_2|W^{i-1},X_1)-\Delta_i]
\\
 &=S(P_{U^rX_1Y_1})+S(P_{\bar{U}^rX_2Y_2})  -\sum_{i=1}^r\Delta_i
 \\
&= S(P_{U^rX_1Y_1})+S(P_{\bar{U}^rX_2Y_2})  -I(X_1;Y_2|W^r).
\end{align}
Then the claim \eqref{e91} follows.
%\begin{align}
%S
% &=I(W_2;Y_1|W_1Y_2)+I(W_2;Y_2|W_1)
% +I(W_1;X_1)+I(W_1X_1;X_2)
% \\
% &=I(W_2;Y_1|W_1Y_2)+I(W_2;Y_2|W_1X_%1)
% +I(W_1Y_2;X_1)+I(W_1X_1;X_2)
% \nonumber\\
% &\quad-I(Y_2;X_1|W_1W_2)
%  \\
%&=S(P_{U^2|XY})+S(P_{\bar{U}^2|XY}) % -I(X_1;Y_2|W_1W_2).
%\end{align}
%In the binary symmetric case, using Lemma~\ref{lem3}, we see that $\frac{R}{S-R}\le \frac{\rho^2}{1-\rho^2}$, as desired.

\end{proof}

\begin{remark}
Above, we followed the original method of proof proposed in a classical paper of Kaspi~\cite{kaspi1985two}, which essentially builds on Csisz\'ar-sum identity in multiuser information theory. This method has been used recently in testing for independence~\cite{xiang2013interactive} and common randomness extraction~\cite{liu2016common}. A similar method was applied in~\cite{BBCR10} to a problem of (approximately) reconstructing a function of two correlated iid strings.
\end{remark}

\subsection{Gaussian case}
To obtain the same lower bound in the Gaussian case, we can use the result for binary symmetric sequence and apply a central limit theorem argument. 

\begin{lemma}\label{lem11}
Let $X$ and $Y$ be jointly Gaussian with correlation $\rho$.
Consider any random variables $U^r$, $r\in \mathbb{Z}$, $|U^r|<\infty$,
satisfying 
\begin{align}
&U_i-(X,U^{i-1})-Y,\quad i\in [r]\setminus2\mathbb{Z},
\\
&U_i-(Y,U^{i-1})-X,\quad i\in [r]\cap2\mathbb{Z}.
\end{align}
% Let
% \begin{align}
% R &= \sum_{i\in [r]\cap2\mathbb{Z}}I(U_i;X|U^{i-1})
% +\sum_{i\in [r]\setminus
% 2\mathbb{Z}}I(U_i;Y|U^{i-1});
% \\
% S &= \sum_{i\in [r]\cap2\mathbb{Z}}I(U_i;Y|U^{i-1})
% +\sum_{i\in [r]\setminus
% 2\mathbb{Z}}I(U_i;X|U^{i-1}).
% \end{align}
Then
$R(P_{U^rXY}) \le \rho^2 S(P_{U^rXY})$.    
\end{lemma}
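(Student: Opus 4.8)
The plan is to deduce the scalar Gaussian symmetric SDPI from the binary case, which is Lemma~\ref{lem3} together with its tensorization Lemma~\ref{lem4}, by a central-limit-theorem argument in the spirit of Lemma~\ref{lem_reduction} and Proposition~\ref{prop_clt}. Fix an interactive protocol $P_{U^r|XY}$ for the jointly Gaussian pair $(X,Y)$ with finite message alphabet. For each $t$ let $(A_\ell,B_\ell)_{\ell=1}^t$ be i.i.d.\ $\pm1$ symmetric pairs of correlation $\rho$, let $N,N'$ be independent standard Gaussians independent of everything else, and put $X^{(t)}=\tfrac1{\sqrt t}(A_1+\dots+A_t)+a_tN$ and $Y^{(t)}=\tfrac1{\sqrt t}(B_1+\dots+B_t)+a_tN'$, with $a_t=o(1)$ chosen as in Proposition~\ref{prop_clt} so that $(X^{(t)},Y^{(t)})\to(X,Y)$ in total variation. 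I would then run the given Gaussian protocol on $(X^{(t)},Y^{(t)})$, with Alice treating $N$ and her protocol coins as local randomness and Bob treating $N'$ and his coins likewise. This produces messages $W^r$ whose conditional law given $(X^{(t)},Y^{(t)})$ is exactly $P_{U^r|XY}$, and which constitute a valid interactive protocol on the binary vector source $(\mathbf{A},\mathbf{B})=(A_\ell,B_\ell)_{\ell=1}^t$ in the sense that the Markov chains required in Lemma~\ref{lem4} hold with $\mathbf{X}=\mathbf{A}$, $\mathbf{Y}=\mathbf{B}$.

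Since $(\mathbf{A},\mathbf{B})$ is a product of $t$ symmetric binary pairs of correlation $\rho$, Lemma~\ref{lem3} bounds each per-coordinate symmetric SDPI ratio by $\rho^2$, so Lemma~\ref{lem4} yields $R(P_{\mathbf{A}\mathbf{B}W^r})\le\rho^2 S(P_{\mathbf{A}\mathbf{B}W^r})$. It then remains to transfer this to $(X^{(t)},Y^{(t)},W^r)$. For the $S$-direction: $X^{(t)}$ is a function of $(\mathbf{A},N)$ and for an odd round $i$ the message $W_i$ depends on $\mathbf{A}$ only through $(X^{(t)},W^{i-1})$, so $I(W_i;\mathbf{A}\mid W^{i-1})\le I(W_i;X^{(t)}\mid W^{i-1})$; summing the analogous bounds over all rounds and using the round-wise decomposition of the information costs gives $S(P_{\mathbf{A}\mathbf{B}W^r})\le S(P_{X^{(t)}Y^{(t)}W^r})=I(W^r;X^{(t)},Y^{(t)})$. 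For the $R$-direction, a short induction on the rounds shows that Alice's side $(\mathbf{A},N,\text{Alice's coins})$ and Bob's private side $(N',\text{Bob's coins})$ remain conditionally independent given $(\mathbf{B},W^{i-1})$ — each new message is a deterministic function of one side's state and the transcript — so $I(W_i;N'\mid\mathbf{B},W^{i-1})=0$ for odd $i$; combined with the data-processing inequality for $Y^{(t)}=g(\mathbf{B},N')$ this gives $I(W_i;Y^{(t)}\mid W^{i-1})\le I(W_i;\mathbf{B}\mid W^{i-1})$, and the symmetric statement for even rounds, hence $R(P_{X^{(t)}Y^{(t)}W^r})\le R(P_{\mathbf{A}\mathbf{B}W^r})$. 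Chaining these relations gives $R(P_{X^{(t)}Y^{(t)}W^r})\le\rho^2 S(P_{X^{(t)}Y^{(t)}W^r})$ for every $t$.

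Finally I would let $t\to\infty$. Because $(X^{(t)},Y^{(t)})\to(X,Y)$ in total variation and the kernel $P_{W^r|X^{(t)}Y^{(t)}}=P_{U^r|XY}$ is fixed, the joint law $P_{X^{(t)}Y^{(t)}W^r}$ converges to $P_{XYU^r}$ in total variation. Here the hypothesis $|U^r|<\infty$ is essential: for a finite-valued transcript, each information quantity occurring in $R$ and $S$ can be written as a difference of an entropy of a function of the $W$'s (continuous in the finitely supported marginal, hence convergent) and an expectation of a bounded measurable function of the continuous inputs (convergent because integrating a bounded function against a total-variation-convergent sequence of measures converges). Hence $R(P_{X^{(t)}Y^{(t)}W^r})\to R(P_{U^rXY})$ and $S(P_{X^{(t)}Y^{(t)}W^r})\to S(P_{U^rXY})$, and passing to the limit gives $R(P_{U^rXY})\le\rho^2 S(P_{U^rXY})$, which is the claim.

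I expect the main obstacle to be exactly this last limiting step. Mutual information is in general only lower semicontinuous, so the convergence of $R$ and $S$ along the approximants genuinely relies on the combination of (i) the finiteness of $W^r$ and (ii) the total-variation — rather than merely weak — convergence of the inputs, which is precisely what the Gaussian smoothing in Proposition~\ref{prop_clt} buys us; a secondary point requiring care is verifying the conditional-independence facts (such as $I(W_i;N'\mid\mathbf{B},W^{i-1})=0$) that allow the binary protocol's $R$ and $S$ to sandwich their Gaussian counterparts.
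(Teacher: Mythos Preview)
Your approach coincides with the paper's: build $(X^{(t)},Y^{(t)})$ from i.i.d.\ binary pairs via the smoothed CLT of Proposition~\ref{prop_clt}, apply the fixed kernel $P_{U^r|XY}$ to it, sandwich $R$ and $S$ between the binary and the approximated-Gaussian quantities, invoke Lemmas~\ref{lem3}--\ref{lem4}, and pass to the limit. Your $R$/$S$ sandwich (including the round-by-round conditional-independence induction showing $I(W_i;N'\mid\mathbf{B},W^{i-1})=0$) is correct and is in fact argued more carefully than the paper's one-line ``by data processing''; the paper phrases the whole thing as a proof by contradiction, but that is cosmetically equivalent to your direct argument.

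The one place that needs more than you wrote is the continuity step, exactly where you flag it. Your proposed decomposition --- finite-alphabet entropy minus the expectation of a fixed bounded function of $(X,Y)$ --- is valid for $S=I(U^r;X,Y)=H(U^r)-H(U^r\mid X,Y)$, because $H(U^r\mid X,Y)$ integrates the \emph{fixed} bounded function $(x,y)\mapsto-\sum_{u^r}P_{U^r|XY}(u^r|x,y)\log P_{U^r|XY}(u^r|x,y)$ against $P^{(t)}_{XY}$. It is \emph{not} valid for the pieces of $R$: writing $R=I(U^r;X)+I(U^r;Y)-I(U^r;X,Y)$, one needs $H(U^r\mid X)$ and $H(U^r\mid Y)$, and there the would-be integrand $-\sum_{u^r}P^{(t)}(u^r\mid x)\log P^{(t)}(u^r\mid x)$ depends on $t$ through $P^{(t)}_{Y|X}$, so it is not a fixed function and TV-convergence of $P^{(t)}_{XY}$ alone does not immediately give the limit. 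The paper closes this gap with a short coupling argument: maximally couple $(X^{(t)},Y^{(t)})$ to $(X,Y)$, observe that on the agreement event (of probability $1-\delta_t$ with $\delta_t=\tfrac12|P^{(t)}_{XY}-P_{XY}|$) the conditional law of $U^r$ given $X$ is the same under both measures, and deduce $|H_{P^{(t)}}(U^r\mid X)-H_P(U^r\mid X)|\le 2\delta_t\log|\mathcal{U}^r|+2h(\delta_t)\to0$. With this patch your proof is complete and matches the paper's.
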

\begin{proof}
We claim the following continuity result:
If $P^{(t)}_{XY}$ converges to $P_{XY}$ in the total variation distance, then $R(P^{(t)}_{U^rXY})$ and $S(P^{(t)}_{U^rXY})$ converge to  $R(P_{U^rXY})$ and $S(P_{U^rXY})$ respectively, where $P^{(t)}_{U^rXY}:=P^{(t)}_{U^r|XY}P_{XY}$. 
The following will establish the lemma. Let $(A_l,B_l)_{l=1}^t$ be an iid\ sequence of binary symmetric random variables with correlation $\rho$, and put $X^{(t)}:=\frac{A_1+\dots+A_t}{\sqrt{t}}
+a_tN$
and $Y^{(t)}:=\frac{B_1+\dots+B_t}{\sqrt{t}}+a_tN'$,
where $N$ and $N'$ are standard Gaussian random variables, and $N$, $N'$, $(X^t,Y^t)$ are independent.
% By a version of the central limit theorem (see e.g.\ \cite{pro}, reproduced in \cite[Theorem~2.2]{bally})
%\cite[Theorem~4]{valiant2010}
%\uri{reviewr 2: "state the theorem you use at least, and how its assumptions are satisfied. As it stands, this proofs feels more like a sketch or outline than an actual proof." I see its commented now...}
By the central limit theorem, 
we can choose some $a_t=o(1)$ such that
the distribution of $(X^{(t)},Y^{(t)})$ converges to the Gaussian distribution $P_{XY}$ in total variation (Proposition~\ref{prop_clt}).
Now suppose that the claim is not true, then there exists $P_{U^r|XY}$ satisfying the required Markov chains and $|\mathcal{U}^r|<\infty$ such that $R(P_{U^rXY}) > \rho^2 S(P_{U^rXY})$.
The continuity claim implies that 
$R(P^{(t)}_{U^rXY}) > \rho^2 S(P^{(t)}_{U^rXY})$ for some $t$. 
However, using the data processing inequality of mutual information it is easy to see that $R(P_{U^rA^tB^t})>R(P^{(t)}_{U^rXY})$
and that 
$S(P_{U^rA^tB^t})<S(P^{(t)}_{U^rXY})$.
Thus $R(P_{U^rA^tB^t})>\rho^2 S(P_{U^rA^tB^t})$, which is in contradition with Lemma~\ref{lem3} and Lemma~\ref{lem4}.

It remains to prove the continuity claim. 
Note that for each $u^r$, $(x,y)\mapsto P_{U^r|XY}(u^r|x,y)$ is a measurable function taking values in $[0,1]$. 
Thus the convergence in total variation implies that $\lim_{t\to\infty} P_{U^r}^{(t)}(u^r)=P_{U^r}(u^r)$ and hence
\begin{align}
\lim_{t\to\infty}H_{P^{(t)}}(U^r)=H_P(U^r),
\label{e108}
\end{align}
where the subscripts of $H$ denote the distributions with respect to which the entropies are computed.
Moreover, 
$$
(x,y)\mapsto P_{U^r|XY}(u^r|x,y)\ln P_{U^r|XY}(u^r|x,y)$$ is also a bounded measurable function, 
so 
\begin{align}
&\quad \lim_{t\to\infty}\mathbb{E}\left[P_{U^r|XY}
(u^r|X^{(t)},Y^{(t)})\ln P_{U^r|XY}(u^r|X^{(t)},Y^{(t)})\right]
\nonumber\\
&=\mathbb{E}\left[P_{U^r|XY}
(u^r|X,Y)\ln P_{U^r|XY}(u^r|X,Y)\right],
\end{align}
and summing over $u^r$ shows that
\begin{align}
\lim_{t\to\infty}H_{P^{(t)}}(U^r|X,Y)=H_P(U^r|X,Y).
\label{e109}
\end{align}
Note that \eqref{e108} and $\eqref{e109}$ imply the convergence of $R(P_{U^rXY}^{(t)}):=I_{P^{(t)}}(U^r;X,Y)$.
Now,
\begin{align}
S(P_{U^rXY}^{(t)})=I_{P^{(t)}}(U^r;X)+I_{P^{(t)}}(U^r;Y)-I_{P^{(t)}}(U^r;X,Y),     
\end{align}
and hence it remains to show that \begin{align}
\lim_{t\to\infty}H_{P^{(t)}}(U^r|X)&=H_P(U^r|X),\label{e110}
\\
\lim_{t\to\infty}H_{P^{(t)}}(U^r|Y)&=H_P(U^r|Y).
\end{align}
By symmetry we only need to prove \eqref{e110}.
Let us construct a coupling of $P_{U^rXY}$ and $P_{U^rXY}^{(t)}$ as follows. First construct a coupling such that $(X^{(t)},Y^{(t)})=(X,Y)$ with probability $\delta_t:=\frac{1}{2}|P_{XY}^{(t)}-P_{XY}|$. 
Let $E$ be the indicator of the event $(X^{(t)},Y^{(t)})\neq (X,Y)$.
When $E=0$, generate $U^{r(t)}=U^r$ according to $P_{U^r|XY}(\cdot|X,Y)$.
When $E=1$, generate $U^{r(t)}$ according to $P_{U^r|XY}(\cdot|X^{(r)},Y^{(r)})$
and $U^r$ according to $P_{U^r|XY}(\cdot|X,Y)$ independently.
Then note that under either $P_{U^rXY}$ or $P^{(t)}_{U^rXY}$,
\begin{align}
|H(U^r|X)-H(U^rE|X)|&\le H(E).
\label{e112}
\end{align}
Moreover,
\begin{align}
& H(U^r,E|X)
\nonumber\\
&=H(U^r|X,E)
\\
&=\mathbb{P}[E=1]H(U^r|X,E=1)
+\mathbb{P}[E=0]H(U^r|X,E=0),
\end{align}
hence
\begin{align}
    |H(U^r,E|X)
-(1-\delta_t)H(U^r|X,E=0)|\le \delta_t\log|\mathcal{U}^r|.
\label{e115}
\end{align}
However, for any $\mathcal{A}\in \mathcal{X}$ and $u^r$, $$\frac{\mathbb{P}[U^r=u^r,X\in \mathcal{A},E=0]}{\mathbb{P}[X\in \mathcal{A},E=0]}=
\frac{\mathbb{P}[U^{r,(t)}=u^r,X^{(t)}\in \mathcal{A},E=0]}{\mathbb{P}[X^{(t)}\in \mathcal{A},E=0]},$$ 
implying that $P^{(t)}_{U^r|X=x,E=0}(u^r)=
P_{U^r|X=x,E=0}(u^r)$ for each $x$ and $u^r$, and hence $H_{P^{(t)}}(U^r|X,E=0)=H_P(U^r|X,E=0)$.
Thus \eqref{e112} and \eqref{e115} imply that 
\begin{align}
&\quad |H_{P^{(t)}}(U^r|X)-H_P(U^r|X)|
\nonumber\\
&\le 2\delta_t\log |\mathcal{U}^r|
+2\left[\delta_t\log\frac{1}{\delta_t}
+(1-\delta_t)\log\frac{1}{1-\delta_t}\right]
\end{align}
and \eqref{e110} follows since $\delta_t\to0$.
\end{proof}

\begin{remark}\label{rem_GSDPI}
In \cite{liu2016common}, we first computed the symmetric SDPI for the binary symmetric distribution and then proved the converse for secret key generation for the binary source.
Then we proved a converse for the Gaussian source using the following reduction argument:
Using a large block of binary symmetric random variables we can simulate a joint distribution converging to the Gaussian distribution in total variation. 
Thus the error probability of the operational problem cannot be too different under the simulated distribution and the true Gaussian distribution. 
In the present paper, we used a different argument to prove something stronger: the symmetric SPDI constant is equal to $\rho^2$ for the Gaussian distribution; this of course implies the converse for the operational problem.
\end{remark}

\section{Acknowledgments}
We thank anonymous reviewer for the idea of reducing the general case to $(\rho,0)$, cf. Corollary~\ref{crlry:anyTwoCorrs}. 
We note that this is precisely the method introduced by Rahman-Wagner \cite{rahman2012optimality}, though they only used it in a one-way setting.
We thank Yihong Wu, Mark Braverman, Rotem Oshman, Himanshu Tyagi and Sahasranand KR for useful discussions. This material is based upon work supported by the National Science Foundation CAREER award under grant agreement CCF-12-53205, the Center for Science of Information (CSoI),
an NSF Science and Technology Center, under grant agreement CCF-09-39370, the European Research Council, under grant agreement 639573, the Israeli Science Foundation, under grant agreement 1367/14, the MIT IDSS Wiener Fellowship and the Yitzhak and Chaya Weinstein Research Institute for Signal Processing.

\bibliography{bibtex_refs}
\bibliographystyle{alpha}

\end{document}